\documentclass{llncs}

\usepackage{graphicx,color,xspace}
\usepackage{comment}
\usepackage{amsmath,amssymb}
\graphicspath{{images/}}

\usepackage{fullpage}

\newcommand{\Ex}{\mathbb{E}}

\newcommand{\real}{\mathbb{R}}
\newcommand{\etal}{\textit{et al.}}

\newcommand{\namedref}[2]{\hyperref[#2]{#1~\ref*{#2}}}

\title{Balanced Line Separators of Unit Disk Graphs\thanks{%
A preliminary version was presented at 15th Algorithms and Data Structures Symposium (WADS 2017).
Chiu, van Renssen and Roeloffzen were supported by JST ERATO Grant Number JPMJER1201, Japan.
Chiu was also supported by ERC STG 757609. 
Korman was partially supported by MEXT KAKENHI No.~17K12635 and the NSF award CCF-1422311.
Katz was partially supported by grant 1884/16 from the Israel Science Foundation.
Okamoto was partially supported by KAKENHI Grant Numbers JP24106005, JP24220003 and JP15K00009, JST CREST Grant Number JPMJCR1402, and Kayamori Foundation for Informational Science Advancement.
Smorodinsky's research was partially supported by Grant 635/16 from the Israel Science Foundation.%
}}

\author{%
Paz Carmi\inst{1}
\and
Man Kwun Chiu\inst{2}
\and
Matthew J. Katz\inst{1}
\and
Matias Korman\inst{3}%
\and
Yoshio Okamoto\inst{4,5}%
\and\\
Andr\'e van Renssen\inst{6}
\and
Marcel Roeloffzen\inst{7}
\and
Taichi Shiitada\inst{4}
\and
Shakhar Smorodinsky\inst{1}%
}

\institute{%
Ben-Gurion University of the Negev, Beer-Sheva, Israel
\and
Institut f\"ur Informatik, Freie Universit\"at Berlin, Berlin, Germany
\and
Tufts University, Medford, MA, USA
\and
The University of Electro-Communications, Tokyo, Japan
\and
RIKEN Center for Advanced Intelligence Project, Tokyo, Japan
\and
University of Sydney, Sydney, Australia
\and
TU Eindhoven, Eindhoven, the Netherlands
}

\pagestyle{plain}
\begin{document}

\maketitle

\begin{abstract}
We prove a geometric version of the graph separator theorem for the unit disk intersection graph: for any set of $n$ unit disks in the plane there exists a line $\ell$ such that $\ell$ intersects at most $O(\sqrt{(m+n)\log{n}})$ disks and each of the halfplanes determined by $\ell$ contains at most $2n/3$ unit disks from the set, where $m$ is the number of intersecting pairs of disks. We also show that an axis-parallel line intersecting $O(\sqrt{m+n})$ disks exists, but each halfplane may contain up to $4n/5$ disks. We give an almost tight lower bound (up to sublogarithmic factors) for our approach, and also show that no line-separator of sublinear size in $n$ exists when we look at disks of arbitrary radii, even when $m=0$. 
Proofs are constructive and suggest simple algorithms that run in linear time.
Experimental evaluation has also been conducted, which shows that for random instances our method outperforms the method by Fox and Pach (whose separator has size $O(\sqrt{m})$).
\end{abstract}

\section{Introduction}

Balanced separators in graphs are a fundamental tool and used in many divide-and-conquer-type algorithms as well as for proving theorems by induction.
Given an undirected graph $G=(V,E)$ with $V$ as its vertex set and $E$ as its edge set, and a non-negative real number $\alpha \in [1/2,1]$, we say that a subset $S\subseteq V$ is an \emph{$\alpha$-separator}
if the vertex set of $G \setminus S$ can be partitioned into two sets $A$ and $B$, each of size at most $\alpha |V|$ such that there is no edge between $A$ and $B$.
The parameter $\alpha$ determines how balanced the two sets $A$ and $B$ are in terms of size. For a balanced separator to be useful we want both the size $|S|$ of the separator and $\alpha \geq 1/2$ to be small.

Much work has been done to prove the existence of separators with certain properties in general sparse graphs.
For example, the well-known Lipton--Tarjan planar separator theorem~\cite{doi:10.1137/0136016}
states that for any $n$-vertex planar 
graph, there exists a $2/3$-separator of size $O(\sqrt{n})$.
Similar theorems have been proven for 
bounded-genus graphs~\cite{DBLP:journals/jal/GilbertHT84},
minor-free graphs~\cite{AST}, 
low-density graphs, and graphs with polynomial expansion~\cite{DBLP:journals/ejc/NesetrilM08a,DBLP:journals/siamcomp/Har-PeledQ17}.

These separator results apply to graph classes that do not contain complete graphs of arbitrary size, and each graph in the classes contains only $O(n)$ edges,
where $n$ is the number of vertices.
Since any $\alpha$-separator of a complete graph has $\Omega(n)$ vertices, the study of separators for graph classes that contain complete graphs seems useless.
However, it is not clear how small a separator can be with respect to the number of edges for possibly dense graphs.

Our focus of interest is possibly dense geometric graphs, which often encode additional geometric information other than adjacency. Even though one can use the separator tools in geometric graphs, often the geometric information is lost in the process. As such, a portion of the literature has focused on the search of balanced separators that also preserve the geometric properties of the geometric graph. Such separators are called \emph{geometric separators}. 

Among several others, we highlight the work of
Miller~\etal~\cite{DBLP:journals/jacm/MillerTTV97}, and
Smith and Wormald~\cite{DBLP:conf/focs/SmithW98}. They considered intersection 
graphs of $n$ balls in $\mathbb{R}^d$
and proved that if every point in $d$-dimensional space is covered by at most $k$ of the given balls,
then there exists a $(d+1)/(d+2)$-separator of size $O(k^{1/d}n^{1-1/d})$ (and such a separator can be found in deterministic linear time~\cite{DBLP:journals/fuin/EppsteinMT95}). More interestingly, the separator itself and the two sets it creates have very nice properties; they show that there exists a $(d{-}1)$-dimensional sphere that intersects at most $O(k^{1/d}n^{1-1/d})$ balls and contains at most $(d+1)n/(d+2)$ balls in its interior and at most $(d+1)n/(d+2)$ balls in its exterior. In this case, the sphere acts as the separator (properly speaking, the balls that intersect the sphere), whereas the two sets $A$ and $B$ are the balls that are inside and outside the separator sphere, respectively. Note that the graph induced by the set $A$ consists of the intersection graph of the balls inside the separator (similarly, $B$ for the balls outside the separator and $S$ for the balls intersecting the sphere).

We emphasize that, even though the size of the separator is larger than the one from Lipton--Tarjan for planar graphs (specially for high values of $d$), the main advantage is that the three subgraphs it creates are geometric graphs of the same family (intersection graphs of balls in $\mathbb{R}^d$).
The bound on the separator size does not hold up well when $k$ is large, even for $d=2$: if $\sqrt{n}$ disks overlap at a single point and the other disks form a path we have $k=\sqrt{n}$ and $m=\Theta(n)$, where $m$ is the number of edges in the intersection graph. Hence, the separator has size $O(\sqrt{kn})=O(m^{3/4})$.

Fox and Pach~\cite{Fox20081070} gave another separator result that follows the same spirit: the intersection graph
of a set of Jordan curves in the plane
has a $2/3$-separator of size $O(\sqrt{m})$
if every pair of curves intersects at a constant number of 
points.\footnote{Without restriction on the number of
intersection points for every pair of curves, the bound of
$O(\sqrt{m}\log m)$ can be achieved \cite{DBLP:journals/cpc/Matousek14}.
} A set of disks in $\mathbb{R}^2$ satisfies this condition, and thus the theorem applies to disk graphs. Their proof can be turned into a polynomial-time algorithm.
However, we need to construct the arrangement of disks,
which takes $O(n^2 2^{\alpha(n)})$ time, where 
$\alpha(n)$ is
the inverse Ackermann function~\cite{DBLP:journals/tcs/EdelsbrunnerGPPSS92},
and in practice an efficient implementation is non-trivial.

From a geometric perspective these two results show that, given a set of unit disks in the plane, we can always find a closed curve in the plane (a circle~\cite{DBLP:journals/jacm/MillerTTV97,DBLP:conf/focs/SmithW98} and a Jordan curve~\cite{Fox20081070}, respectively) to partition the set. The disks intersected by the curve are those in the separator, and the two disjoint sets are the disks inside and outside the curve, respectively. 

\paragraph{Results and Paper Organization.} 
In this paper we continue the idea of geometric separators and show that a balanced separator always exists, even if we constrain the separator to be a line (see \figurename~\ref{fig:linesep}). Given a set of $n$ unit disks with $m$ pairs of intersecting disks, we show that a line $2/3$-separator of size $O(\sqrt{(m+n) \log n})$ can be found in expected $O(n)$ time, and that an axis-parallel line $4/5$-separator of size $O(\sqrt{m+n})$
can be found in deterministic $O(n)$ time.

\begin{figure}[t]
  \centering
  \includegraphics{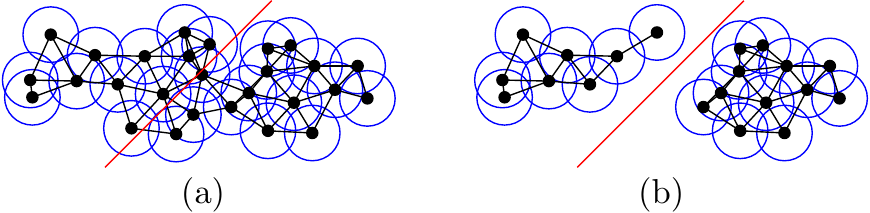}
  \caption{An example of a line separator of a unit disk graph.
    (a) A family of unit disks (blue) and a line (red).
    (b) Removing the disks intersected by the red line leaves a disconnected
    graph.
  }
  \label{fig:linesep}
\end{figure}

Comparing our results with the previous work, our algorithm matches or improves in four ways, see also \tablename~\ref{tab:results-comp}. $(i)$ simplicity of the shape: circle~\cite{DBLP:journals/jacm/MillerTTV97,DBLP:conf/focs/SmithW98} vs.\ Jordan curve~\cite{Fox20081070} vs.\ our line, $(ii)$ balance of the sets $A$ and $B$: $3/4$~\cite{DBLP:journals/jacm/MillerTTV97,DBLP:conf/focs/SmithW98} vs.\ $2/3$ for both~\cite{Fox20081070} and us, $(iii)$ size of the separator: $O(m^{3/4})$~\cite{DBLP:journals/jacm/MillerTTV97,DBLP:conf/focs/SmithW98} vs.\ $O(\sqrt{m})$~\cite{Fox20081070} vs.\ our $\tilde{O}(\sqrt{m})$.\footnote{The $\tilde{O}(\cdot)$ notation suppresses sublogarithmic factors. In particular, we note that our separator is slightly larger than the Fox-Pach separator.} Finally, $(iv)$ our algorithms are simple and asymptotically faster: $O(n)$~\cite{DBLP:journals/jacm/MillerTTV97,DBLP:conf/focs/SmithW98} vs.\ $\tilde{O}(n^2)$~\cite{Fox20081070} vs.\ our $O(n)$. 
Note that those algorithms require \emph{geometric representations}.
For example, unit disk graphs are given by a set of unit disks, not as a graph
combinatorially.
Indeed, finding a geometric representaion of a unit disk graph is not a
trivial task:
this problem is NP-hard~\cite{DBLP:journals/comgeo/BreuK98}, 
and $\exists$$\mathbb{R}$-complete~\cite{DBLP:journals/dcg/KangM12}.

\begin{table}[t]
\centering
\caption{Comparison of our results with other geometric separator results.}
\begin{tabular}{c | c | c | c | c |c }
\textbf{result} & \textbf{separator shape} & \textbf{balance} & \textbf{separator size} & \textbf{run-time} & \textbf{object type}\\
\hline 
\cite{DBLP:journals/jacm/MillerTTV97,DBLP:conf/focs/SmithW98} & circle & $3/4$ & $O(m^{3/4})$ & $O(n)$ & arbitrary disks\\
\cite{Fox20081070} & Jordan curve & $2/3$ & $O(\sqrt{m})$ & $\tilde{O}(n^2)$ & pseudodisks\\
Thm.~\ref{thm:joint_unit} & line & $2/3$ & $\tilde{O}(\sqrt{m})$ & $O(n)$& unit disks\\
Thm.~\ref{thm:axis-parallel} & axis-parallel line & $4/5$ & $O(\sqrt{m})$ & $O(n)$ & unit disks\\ \hline
\cite{DBLP:journals/dcg/AlonKP89,DBLP:journals/jocg/LofflerM14,DBLP:conf/esa/HoffmannKM14} & line& $1/2$& $O(\sqrt{n\log n})$& $O(n)$& {\bf disjoint} unit disks\\
\cite{DBLP:conf/esa/HoffmannKM14} & line & $1-\alpha$& $O(\sqrt{n/(1-2\alpha)})$&$O(n)$& {\bf disjoint} unit disks  \\
\cite{DBLP:journals/jocg/LofflerM14} & axis-parallel line & $9/10$ & $O(\sqrt{n})$ & $O(n)$& {\bf disjoint} unit disks \\
\end{tabular}
\label{tab:results-comp}
\end{table}

We emphasize that our results focus on \emph{unit} disk graphs, while the other results hold for disk graphs of arbitrary radii, too. Indeed, if we want to separate disks of arbitrary radii with a line, we show that the separator's size may be as large as $\Omega(n)$. We also prove that for unit disks our algorithm may fail to find a line $2/3$-separator of size better than $O(\sqrt{m \log(n/\sqrt{m})})$ in the worst case; the exact statement can be found in Section \ref{sec:lowerbound}. In this sense, the size of our separators is asymptotically almost tight. In Section \ref{sec:experiment}, experimental results are presented.
We evaluate the performance of our algorithm, compare it with
the method by Fox and Pach \cite{Fox20081070} 
in terms of the size of the produced separators for random instances, and conclude that our algorithm outperforms theirs for the intersection graphs of unit disks.

Working with a line separator for intersecting disks has some difficulty.
If we chose to separate pairwise disjoint geometric objects by a Jordan curve, 
then we could employ a volume argument for the interior of the curve.
However, we cannot use a volume argument for line separators since
the line does not determine a bounded region.

\paragraph{Other Related Work.}
In a different context, line separators of \emph{pairwise disjoint} unit disks have also been studied. Since the disks are pairwise disjoint, the intersection graph is trivially empty and can be easily separated. Instead, the focus is now to find a closed curve that intersects few disks, such that the two connected components it defines contain roughly the same number of disks. 

Alon~\etal~\cite{DBLP:journals/dcg/AlonKP89} 
proved that for a given set $\cal{D}$ of $n$ pairwise disjoint unit disks,\footnote{The result extends to pairwise disjoint fat objects that are convex and of similar area (see Theorem 4.1 of~\cite{DBLP:journals/dcg/AlonKP89}). For the sake of conciseness we only talk about unit disks.} there exists a slope $a$ such that 
every line with slope $a$ intersects $O(\sqrt{n \log n})$ unit disks
of $\cal{D}$. In particular, the halving line of that slope will be a nice separator (each halfplane will have at most $n/2-O(\sqrt{n \log n})$ disks fully contained in). Their proof is probabilistic, which can be turned into an expected $O(n)$-time randomized algorithm~\cite{DBLP:journals/jocg/LofflerM14}.
A deterministic $O(n)$-time algorithm was afterwards given by Hoffmann~\etal~\cite{DBLP:conf/esa/HoffmannKM14},
who also showed how to find a line $\ell$ that intersects at most $O(\sqrt{n/(1-2\alpha)})$ unit disks and 
each halfplane 
 contains at most $(1-\alpha)n$ disks
(for any $0 < \alpha < 1/2$). L\"offler and Mulzer \cite{DBLP:journals/jocg/LofflerM14}
proved that
there exists 
an axis-parallel line $\ell$ such that $\ell$ intersects $O(\sqrt{n})$ 
disks, and each halfplane
 contains at most $9n/10$ 
 unit disks, and such line can be found in $O(n)$ time.
Our result is more general (since it allows intersections), and has a better balancing parameter (their~$9/10$ versus our~$4/5$). 
For comparison purposes, these three results are also shown in Table~\ref{tab:results-comp}.

A significant amount of research has focused on the search for balanced line separators of unit disk graphs in the plane, but unlike the ones mentioned before no guarantee is given on the shape of the separator. Yan~\etal~\cite{DBLP:journals/comgeo/YanXD12} studied a separator
of unit disk graphs
for designing a low-delay compact routing labeling scheme for ad-hoc networks
modeled by unit disk graphs.
Their separator is a $2/3$-separator, but has no size guarantee.
Fu and Wang \cite{DBLP:journals/siamcomp/FuW07} studied the case where a 
unit disk graph is a $\sqrt{n}\times\sqrt{n}$ grid generated from a
regular grid, and proved that there exists a line $2/3$-separator of size at most $1.129\sqrt{n}$.
They used the obtained separator to give the first subexponential-time
algorithm for the protein folding problem in the HP model.
The bound of $1.129\sqrt{n}$ was afterwards improved by Fu~\etal~\cite{DBLP:journals/ijcga/FuOX08} to $1.02074\sqrt{n}$.
We note that it is not known whether a minimum-size $2/3$-separator
of a unit disk graph can be computed in polynomial time, although
the problem is known to be NP-hard for 
graphs of maximum degree three \cite{DBLP:journals/ipl/BuiJ92},
$3$-regular graphs \cite{DBLP:journals/computing/MullerW91}, 
and planar graphs \cite{DBLP:journals/jgaa/Fukuyama06}. Finally, Alber and Fiala~\cite{DBLP:journals/jal/AlberF04} studied the existence of separators for disk intersection graphs, but ask for additional constraints to the set of disks (such as requiring the disks to be at least $\lambda$ units apart, or bounding the ratio between the radii of the smallest and the largest disks).

\paragraph{Preliminaries.}
In this paper, all disks are assumed to be closed (i.e., the boundaries are part of the disks), and
a \emph{unit disk} has radius one (thus diameter two).
For a set $S$ of $n$ points in $\real^2$, there always exists a point $p \in \real^2$
such that every halfplane containing $p$ contains at least $n/3$ points
from $S$.
Such a point $p$ is called a \emph{centerpoint} of $S$, and can be found
in $O(n)$ time \cite{DBLP:journals/dcg/JadhavM94}.
Let $\ell$ be a line through a centerpoint of $S$.
Then, each of the two closed halfplanes bounded by~$\ell$ contains at least $n/3$ 
points of $S$,
which in turn means that 
each of the two open halfplanes bounded by $\ell$
contains at most $2n/3$ points of $S$ each.
Here, a halfplane $H$ (closed or open) \emph{contains} a point $p$ if
$p \in H$.
We also say a halfplane~$H$ \emph{contains} a disk $D$ if $D \subseteq H$.

%---------------------------------------------------------------
\section{Upper Bounds}
%---------------------------------------------------------------
\subsection{Pairwise Disjoint Unit Disks}

Let $\cal{D}$ be a set of $n$ unit disks in the plane.
We first consider the case where the disks in $\cal{D}$ are pairwise disjoint.
The results from this case will also be used for the more general case where the disks in $\cal{D}$
are not necessarily disjoint.
We note that the next lemma has a flavor similar to a theorem by Alon~\etal~\cite{DBLP:journals/dcg/AlonKP89}.

\begin{lemma}
\label{lem:disjoint_unit}
Let $\cal{D}$ be a set of $n$ pairwise disjoint unit disks in the plane and let~$p$ be any point in the plane.
Then the expected number of disks intersected by a random line through $p$ is $O(\sqrt{n})$.
\end{lemma}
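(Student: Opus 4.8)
The plan is to parametrize the random line through $p$ by its angle $\theta$ (uniformly in $[0,\pi)$) and bound the expected number of disks it crosses by linearity of expectation: $\Ex[\#\text{crossed}] = \sum_{D \in \mathcal{D}} \Pr[\ell_\theta \text{ crosses } D]$. So the core task is to estimate, for a single unit disk $D$ at distance $r$ from $p$, the probability that a uniformly random line through $p$ hits it, and then to sum these probabilities using the disjointness of the disks.

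For the per-disk probability: a unit disk centered at distance $r \geq 2$ from $p$ subtends an angular interval of width roughly $2\arcsin(1/r) = \Theta(1/r)$ at $p$, so $\Pr[\ell_\theta \text{ crosses } D] = \Theta(1/r)$; for $r < 2$ the probability is $O(1)$. Hence $\Ex[\#\text{crossed}] = O\!\big(\sum_{D: r_D < 2} 1 + \sum_{D: r_D \geq 2} 1/r_D\big)$, where $r_D$ denotes the distance from $p$ to the center of $D$.

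**Next I would** control both sums via a packing/annulus argument, which is where disjointness enters. Partition the plane into annuli $A_k = \{x : 2^{k} \le \|x - p\| < 2^{k+1}\}$ for $k \ge 0$ (plus the disk of radius $2$ around $p$). Since the disks are pairwise disjoint unit disks, a standard area argument shows that the number of disk centers lying in an annulus of inner radius $\rho$ and constant width-ratio is $O(\rho)$ — more precisely, $A_k$ (suitably fattened by $1$ to contain whole disks whose centers lie in it) has area $O(2^{2k})$, so it contains $O(2^{2k})$ disks, but that is too weak; instead I count centers in the thin sub-annulus at distance $\approx \rho$: the disks with center in $\{\rho \le r < 2\rho\}$ are disjoint and fit in a region of area $O(\rho^2)$, giving $O(\rho^2)$ of them, each contributing $O(1/\rho)$, for a total of $O(\rho)$ from that annulus. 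Summing over the $O(\log n)$ relevant dyadic scales $\rho = 2, 4, \dots$ up to $O(\sqrt n)$ (beyond which no disks can be packed without exceeding $n$) gives a geometric-type sum dominated by its largest term, which is $O(\sqrt n)$. The innermost disk of radius $2$ contains $O(1)$ disjoint unit disks, contributing $O(1)$.

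**The main obstacle** is making the counting tight rather than losing a logarithmic factor: a naive bound "each annulus contributes $O(\sqrt n)$, and there are $O(\log n)$ annuli" would yield only $O(\sqrt n \log n)$. The point is that the per-annulus contribution is $O(\rho)$ and these form a geometric progression in $\rho$, so the sum telescopes to $O(\rho_{\max}) = O(\sqrt n)$; one must also handle the boundary case where an annulus contains fewer than its packing capacity of disks (the last annulus may be only partially filled by the remaining disks), but since contributions only decrease when disks are removed, this only helps. A secondary technical point is the $\arcsin(1/r) = \Theta(1/r)$ estimate and ensuring the disk does not contain $p$ (if it does, it is crossed with probability $1$, but there are at most $O(1)$ such disks by disjointness). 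Assembling these pieces yields $\Ex[\#\text{crossed}] = O(\sqrt n)$.
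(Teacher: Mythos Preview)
Your overall strategy --- linearity of expectation, the $\Theta(1/r)$ hitting probability for a disk at distance $r$, and an annulus decomposition around $p$ --- is the same as the paper's. The paper uses arithmetically spaced circles $C_i$ of radius $2i$ (so each disk meets at most two of them) rather than your dyadic annuli, and for the near range it uses the cleaner deterministic fact that any line through $p$ meets at most $O(1)$ disjoint unit disks from a single $\mathcal{D}_i$, instead of your area-packing bound; but these are cosmetic differences.

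There is, however, a real gap in your summation. You assert that the dyadic scales run ``up to $O(\sqrt{n})$ (beyond which no disks can be packed without exceeding $n$)'' and then treat the sum $\sum_k O(2^k)$ as geometric with largest term $O(\sqrt{n})$. But nothing forces the disks to lie within distance $O(\sqrt{n})$ of $p$: all $n$ disks could sit in a single annulus at distance $n^{10}$, or be spread over arbitrarily many far annuli. Your per-annulus bound of $O(\rho)$ is only the packing cap; summing it over all scales that actually contain disks can diverge. Your remark about ``the last annulus may be only partially filled'' does not address this, since there may be many far annuli, not just one.

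The fix is exactly the split the paper performs. Let $n_\rho$ be the number of disks in the annulus at scale $\rho$; its contribution to the expectation is $O(n_\rho/\rho)$, and packing gives $n_\rho = O(\rho^2)$. For $\rho \le \sqrt{n}$ use the packing bound to get contribution $O(\rho)$ and sum geometrically to $O(\sqrt{n})$. For $\rho > \sqrt{n}$ use instead $n_\rho/\rho \le n_\rho/\sqrt{n}$ and sum over all far annuli to get $(\sum_\rho n_\rho)/\sqrt{n} \le n/\sqrt{n} = \sqrt{n}$. (Equivalently, one can argue by rearrangement that moving every disk as close to $p$ as disjointness allows only increases $\sum 1/r_D$, and in that extremal configuration the disks do fill annuli up to $\Theta(\sqrt{n})$ --- but you would need to state and justify that monotonicity explicitly.) With this two-regime split your argument goes through and matches the paper's.
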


\begin{proof}
Let $C_i$ be the circle of radius $2i$ centered at $p$, for $i=0,1,\ldots$.
Then each disk in ${\cal D}$ is intersected by at most two of these circles---if a disk is intersected by two circles the intersection takes place on the boundary of the disk.
Let ${\cal D}_i \subseteq {\cal D}$ be the set of disks that have non-empty intersection with $C_i$, for $i=0,1,\ldots$.
See \figurename~\ref{fig:disjoint_unit}.
Note that $\sum_{i} |{\cal D}_i| \leq 2n$.

\begin{figure}[t]
  \centering
  \begin{minipage}{0.31\textwidth}
    \centering
    \resizebox{\textwidth}{!}{\includegraphics{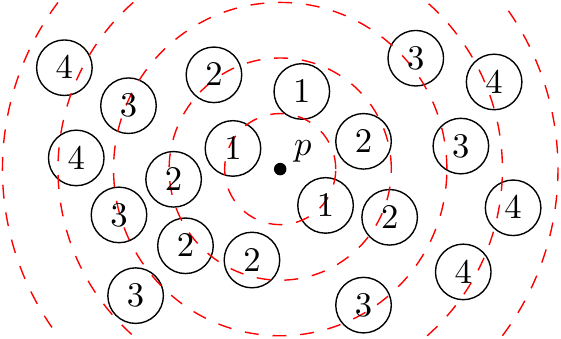}}
    \caption{Proof of Lemma \ref{lem:disjoint_unit}.
      The number $i$ in each disk means that it intersects the circle of radius $2i$ centered at $p$.
    }
    \label{fig:disjoint_unit}
  \end{minipage}
  \hfill
  \begin{minipage}{0.64\textwidth}
    \centering
    \resizebox{\textwidth}{!}{\includegraphics{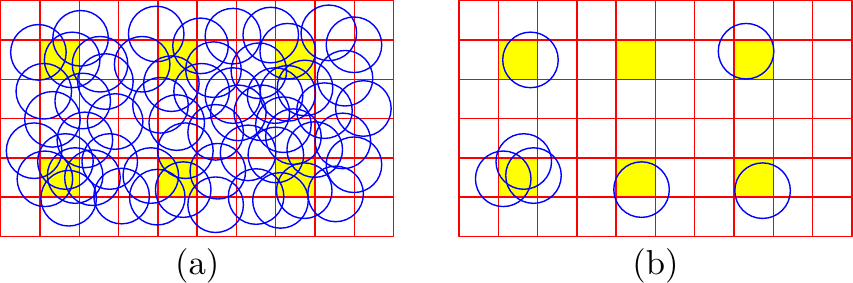}}
    \caption{Proof of Lemma \ref{lem:joint_unit}.
      (a) A grid of $\sqrt{2}\times \sqrt{2}$ squares is laid over the family of unit disks.
      (b) Disks associated to the same cell intersect, but disks associated to different cells of the same color do not intersect.
      If $j$ is the index for color yellow, then $n_j = 7$, $l_{j0}=4$, and $l_{j1}=1$ in this example.}
    \label{fig:joint_unit}
  \end{minipage}
\end{figure}

Given a random line $\ell$ through $p$, the number $k_i$ of disks of ${\cal D}_i$ that are intersected by $\ell$ is at most four, due to disjointness, and
its expectation is $O(|{\cal D}_i|/i)$.
Therefore, by the linearity of expectation,
the expected number of disks of $\cal{D}$ intersected by $\ell$ is bounded by
\begin{align*}
\Ex\left[\sum_{i \ge 0} k_i\right]
&=   \Ex\left[\sum_{i\colon i\leq \sqrt{n}} k_i\right] + \Ex\left[\sum_{i\colon i > \sqrt{n}} k_i\right]\\
&\le 4\sqrt{n} + \sum_{i\colon i > \sqrt{n}} O(|{\cal D}_i|/\sqrt{n})
 =   O(\sqrt{n}) + O(n/\sqrt{n}) = O(\sqrt{n}). 
\tag*{\qed}
\end{align*}
\end{proof}

\begin{corollary}
\label{cor:disjoint_unit}
Let $\cal{D}$ be a set of $n$ pairwise disjoint unit disks in the plane. 
Then, there exists a line $\ell$ that intersects $O(\sqrt{n})$ disks of $\cal{D}$ such that each of the two open halfplanes bounded by $\ell$ contains 
at most $2n/3$ disks of $\cal{D}$.
Moreover, such a line can be found in $O(n)$ time with probability
at least $3/4$.
\end{corollary}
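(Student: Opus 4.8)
The plan is to combine Lemma~\ref{lem:disjoint_unit} with a centerpoint argument. First I would let $p$ be a centerpoint of the set of centers of the disks in $\cal{D}$; since the disks are pairwise disjoint their centers are distinct, and such a point exists and can be computed in $O(n)$ time by the result cited in the preliminaries. By the defining property of a centerpoint, every closed halfplane through $p$ contains at least $n/3$ of the centers, hence every \emph{open} halfplane bounded by a line through $p$ contains at most $2n/3$ centers. A disk contained in an open halfplane $H$ necessarily has its center in $H$, so in fact \emph{every} line through $p$ already satisfies the balance requirement: each of the two open halfplanes it bounds contains at most $2n/3$ disks of $\cal{D}$.

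It remains to control the number of disks the line crosses. By Lemma~\ref{lem:disjoint_unit} applied to the point $p$, a line $\ell$ chosen uniformly at random among the lines through $p$ intersects $O(\sqrt{n})$ disks of $\cal{D}$ in expectation. In particular there is at least one line through $p$ intersecting $O(\sqrt{n})$ disks, which together with the balance property from the previous paragraph establishes the existence statement. For the algorithmic claim I would invoke Markov's inequality: the random line through $p$ intersects more than four times its expectation with probability at most $1/4$, so with probability at least $3/4$ it intersects $O(\sqrt{n})$ disks. Computing the centerpoint, sampling a random line through $p$, counting the disks it crosses, and counting the disks on each side all take $O(n)$ time, so the whole procedure runs in $O(n)$ time and succeeds with probability at least $3/4$.

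The only point requiring a little care is the bookkeeping between ``disks intersected by $\ell$'' and ``disks contained in an open halfplane'': a disk that meets $\ell$ is charged to the separator and to neither side, and since the side counts are measured via centers no disk is double-counted, so deleting the separator disks leaves at most $2n/3$ disks in each part. Beyond this, I do not expect a genuine obstacle; the corollary is essentially just the repackaging of Lemma~\ref{lem:disjoint_unit} into separator form, with the centerpoint supplying the balance and Markov's inequality turning the expectation bound into a high-probability algorithm.
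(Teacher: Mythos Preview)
Your proposal is correct and follows essentially the same approach as the paper: take a centerpoint $p$ of the disk centers, observe that any line through $p$ is balanced, apply Lemma~\ref{lem:disjoint_unit} to get an $O(\sqrt{n})$ expected number of crossed disks, and use Markov's inequality to obtain the $3/4$ success probability for the linear-time randomized algorithm. The paper's proof is organized identically, just slightly more tersely and without your extra bookkeeping paragraph.
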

\begin{proof}
Let $p$ be a centerpoint of the set of centers of disks in ${\cal D}$.
By Lemma~\ref{lem:disjoint_unit}, some line through $p$ must intersect
at most $O(\sqrt{n})$ disks from ${\cal D}$.
Since $p$ is a centerpoint, each of the two open halfplanes bounded by
$\ell$ contains at most $2n/3$ centers of disks in ${\cal D}$,
which means that the halfplane contains at most $2n/3$ disks from ${\cal D}$.

The argument above suggests the following algorithm:
first we compute a centerpoint~$p$ of the centers of a given set of
disks, and then choose a line through~$p$ uniformly at random.
By Lemma \ref{lem:disjoint_unit} and Markov's inequality, 
the probability that the random line intersects more than $c\sqrt{n}$ disks
is at most $1/4$ for some constant~$c$.
Thus, a desired line can be found with probability at least $3/4$.
The running time is linear in the number of disks since
a centerpoint can be found in linear time \cite{DBLP:journals/dcg/JadhavM94}.
\qed
\end{proof}

In the statement above, the fact that the probability of success is $3/4$ is not important (any positive probability would have also sufficed).

\subsection{General Case}
\label{subsec:generalcase}

We now consider the general case where the disks are not necessarily disjoint.

\begin{lemma}\label{lem:joint_unit}
Let $\cal{D}$ be a set of $n$ unit disks in the plane with $m$ intersecting pairs, and let $p$ be any point in the plane.
Then the expected number of disks intersected by a random line through $p$ is $O(\sqrt{(m+n)\log n})$.
\end{lemma}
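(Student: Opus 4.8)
The plan is to combine a grid-clustering idea with the disjoint case already proved in \lemmaref{lem:disjoint_unit}. First I would overlay the plane with an axis-parallel grid of $\sqrt2\times\sqrt2$ cells and assign each disk of $\mathcal{D}$ to the unique cell containing its centre (breaking ties arbitrarily). Since a $\sqrt2\times\sqrt2$ cell has diameter $2$, any two disks assigned to the same cell have centres at distance at most $2$ and hence intersect; this is precisely what will let us charge cell populations against $m$. Next I would colour the cells with $9$ colours, the colour of a cell being the pair of its two integer indices taken modulo $3$. Two distinct cells of the same colour differ by at least $3$ grid steps in some coordinate, so the centres of any two disks lying in distinct cells of the same colour are more than $2$ apart, i.e.\ such disks are pairwise disjoint (cf.\ \figurename~\ref{fig:joint_unit}).

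Now fix a colour $j$. For a non-empty cell $t$ of colour $j$, let $l_{j,t}$ be the number of disks assigned to $t$. Every disk assigned to $t$ lies within distance $2$ of the centre of $t$, so all of them are contained in the disk $D^*_t$ of radius $2$ centred at the centre of $t$; moreover a line meets some disk assigned to $t$ only if it meets $D^*_t$. The carriers $\{D^*_t\}$ over cells $t$ of colour $j$ are pairwise disjoint, because their centres are more than $4$ apart. I would then split the cells of colour $j$ into $O(\log n)$ dyadic buckets according to $l_{j,t}\in[2^s,2^{s+1})$, writing $N_{j,s}$ for the number of cells in bucket $s$. Applying a trivially rescaled version of \lemmaref{lem:disjoint_unit} to the disjoint radius-$2$ carriers of a single bucket, the expected number of them met by a random line through $p$ is $O(\sqrt{N_{j,s}})$, and each contributes fewer than $2^{s+1}$ disks of $\mathcal{D}$; hence the expected number of colour-$j$, bucket-$s$ disks met is $O(2^{s}\sqrt{N_{j,s}})$.

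It then remains to sum over buckets and colours. For $s\ge 1$ each bucket-$s$ cell contains $\binom{l_{j,t}}{2}\ge\binom{2^s}{2}\ge 2^{2s}/4$ intersecting pairs, all internal to that single cell, so these pair-sets are disjoint across cells and across colours; writing $M_{j,s}$ for the total number of such pairs we get $N_{j,s}\le 4M_{j,s}/2^{2s}$, hence $2^{s}\sqrt{N_{j,s}}\le 2\sqrt{M_{j,s}}$, while $\sum_{j,s\ge 1}M_{j,s}\le m$. Since there are only $O(\log n)$ colour--bucket pairs, Cauchy--Schwarz gives $\sum_{j,s\ge 1}\sqrt{M_{j,s}}=O(\sqrt{m\log n})$. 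For the singleton bucket $s=0$ the carriers hold one disk each, so the expected contribution of colour $j$ is $O(\sqrt{N_{j,0}})=O(\sqrt n)$, and there are only $9$ colours. Adding the two parts yields $O(\sqrt n)+O(\sqrt{m\log n})=O(\sqrt{(m+n)\log n})$, as claimed.

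The hard part, I expect, is the bookkeeping in this last step rather than any single inequality: one must make sure the per-cell count $\binom{l_{j,t}}{2}$ is charged to $m$ with no double counting (which works precisely because within-cell pairs live in a single cell and the colouring is a fixed partition of the cells), that the $s=0$ bucket is charged against $n$ and not against $m$, and that the $O(\log n)$ loss enters only through the Cauchy--Schwarz over the dyadic scales. Everything else — the grid geometry, the $9$-colouring, and the reduction to \lemmaref{lem:disjoint_unit} via the carrier disks $D^*_t$ — is routine once the clustering is set up.
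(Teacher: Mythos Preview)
Your proposal is correct and follows essentially the same approach as the paper: the $\sqrt2\times\sqrt2$ grid with a $9$-colouring, replacing each monochromatic cell by a radius-$2$ carrier disk, dyadically bucketing the cell populations, applying \lemmaref{lem:disjoint_unit} to the disjoint carriers of each bucket, and finishing with Cauchy--Schwarz over the $O(\log n)$ colour--bucket pairs. The only cosmetic difference is that the paper handles all buckets uniformly via the single estimate $l_{ji}2^{2i-1}\le m_{ji}+x_{ji}/2$ (so each bucket contributes $O(\sqrt{m_{ji}+x_{ji}})$ and the Cauchy--Schwarz step yields $O(\sqrt{(m+n)\log n})$ directly), whereas you split off the singleton bucket $s=0$ and charge it separately against $n$; both bookkeeping choices give the same bound.
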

\begin{proof}
Consider a grid of $\sqrt{2} \times \sqrt{2}$ squares. Each grid cell is treated as right-open and top-open so that it is of the form of $[x,x+\sqrt{2})\times [y,y+\sqrt{2})$.
Associate each disk in $\cal{D}$ with the grid cell containing its center, see \figurename~\ref{fig:joint_unit}(a).

Observe that one can color the grid cells with nine colors for every $3\times 3$ block of grid cells, so that no two disks that are associated with different grid cells of the same color intersect, see \figurename~\ref{fig:joint_unit}(b). 
Consider one of the colors $j$, with $1 \le j \le 9$, and let ${\mathbb C}_j$ be the collection of subsets of $\cal{D}$ associated with the grid cells of this color:
\[
{\mathbb C}_j = \{{\cal C} \subseteq {\cal D} \mid \text{ the center of disks in } {\cal C} \text{ lie in the same grid cell of color } j\}.
\]
Then, ${\mathbb C}_j$ has the following two properties:
(i) each subset ${\cal C} \in {\mathbb C}_j$ in the same grid cell is a clique, i.e., any two disks in the subset intersect each other;
(ii) any two disks from two different subsets in ${\mathbb C}_j$ are pairwise disjoint.
Let $n_j = \sum_{{\cal C} \in {\mathbb C}_j} |{\cal C}|$ denote the number of
disks in ${\cal D}$ associated to grid cells of color $j$.

We divide the cliques in ${\mathbb C}_j$ into $O(\log n_j)$ buckets ${\mathbb B}_{j0},{\mathbb B}_{j1},\ldots$, where ${\mathbb B}_{ji}$ consists of all cliques of ${\mathbb C}_j$ whose size is in the range $[2^i,2^{i+1})$. Set $l_{ji} = |{\mathbb B}_{ji}|$, for $i=0,1,\ldots$.
Then, the sum $x_{ji}$ of the sizes of the cliques in ${\mathbb B}_{ji}$ is in the range $l_{ji}2^i \le x_{ji} < l_{ji}2^{i+1} $.
We also know that $\sum_i x_{ji} = n_j$.
Let 
$m_{ji} = \sum_{{\cal C} \in \mathbb{B}_{ji}} |{\cal C}|(|{\cal C}|-1)/2$
denote the number of pairs of intersecting disks within ${\mathbb B}_{ji}$.
Then, $l_{ji} (2^{2i-1}-2^{i-1}) \le m_{ji} < l_{ji} (2^{2i+1}-2^{i})$ and
$\sum_i m_{ji} \le \sum_j \sum_i m_{ji} \leq m.$

We first compute the expected number of disks in cliques of ${\mathbb B}_{ji}$ intersected by a random line through $p$. Since the union of the disks in each clique is contained in a disk of radius 2 such that these disks of radius 2 are disjoint, a random line through $p$ intersects only $O(\sqrt{l_{ji}})$ cliques of ${\mathbb B}_{ji}$ by Lemma~\ref{lem:disjoint_unit}, and therefore only $O(\sqrt{l_{ji}}2^{i+1})$ disks of~${\cal D}$. By the definition of ${\mathbb B}_{ji}$, we have $l_{ji}2^{2i-1} \leq m_{ji} + l_{ji} 2^{i-1} \leq m_{ji} + x_{ji}/2$. Thus, a random line through $p$ intersects $O(\sqrt{m_{ji} + x_{ji}})$ disks in expectation.

Then, by the linearity of expectation,
we sum the numbers for all $j=1, \ldots, 9$ and $i = 0, \ldots, \log n_j$:
\begin{eqnarray*}
\sum_j \sum_i O(\sqrt{m_{ji} + x_{ji}})
 &\leq& O\left(\sqrt{\sum_j \sum_i (m_{ji} + x_{ji})} \sqrt{\sum_j \sum_i 1}\right)\\
 &\leq& O(\sqrt{(m+n)\log n}),
\end{eqnarray*}
where the first inequality follows from the Cauchy--Schwarz inequality. 
\qed
\end{proof}

In the same way as Corollary~\ref{cor:disjoint_unit} follows from Lemma~\ref{lem:disjoint_unit}, the following theorem follows from Lemma~\ref{lem:joint_unit}.

\begin{theorem}
\label{thm:joint_unit}
Let $\cal{D}$ be a set of $n$ unit disks in the plane with $m$ intersecting pairs.
Then, there exists a line $\ell$ that intersects $O(\sqrt{(m+n)\log n})$ disks
of ${\cal D}$ such that each of the two open halfplanes bounded by $\ell$ contains
at most $2n/3$ disks of ${\cal D}$.
Moreover, such a line can be found in $O(n)$ time with probability at least $3/4$.
\end{theorem}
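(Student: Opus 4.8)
The plan is to derive \theoremref{thm:joint_unit} from \lemmaref{lem:joint_unit} in exactly the way \corollaryref{cor:disjoint_unit} was derived from \lemmaref{lem:disjoint_unit}, so the proof should be a short two-part argument: first establish existence of a good line, then turn the existence argument into a linear-time randomized algorithm. For existence, I would let $p$ be a centerpoint of the set of centers of the disks in $\cal{D}$; such a point exists and, as recalled in the Preliminaries, every closed halfplane through $p$ contains at least $n/3$ of the centers, hence every open halfplane bounded by a line through $p$ contains at most $2n/3$ centers. Since a disk is contained in an open halfplane only if its center lies in that halfplane, each open halfplane bounded by a line through $p$ contains at most $2n/3$ disks of $\cal{D}$. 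By \lemmaref{lem:joint_unit}, the expected number of disks intersected by a random line through $p$ is $O(\sqrt{(m+n)\log n})$, so in particular there exists a line $\ell$ through $p$ intersecting $O(\sqrt{(m+n)\log n})$ disks, and this $\ell$ already satisfies the balance condition because it passes through the centerpoint $p$.

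For the algorithmic part, I would spell out the randomized procedure: compute a centerpoint $p$ of the $n$ centers in $O(n)$ time using the algorithm of Jadhav and Mah{\'e}shwari~\cite{DBLP:journals/dcg/JadhavM94}, then pick a line through $p$ with a uniformly random slope and count how many disks of $\cal{D}$ it intersects (a single linear scan). By \lemmaref{lem:joint_unit} the expected count is at most $c\sqrt{(m+n)\log n}$ for some constant $c$, so by Markov's inequality the probability that the count exceeds $4c\sqrt{(m+n)\log n}$ is at most $1/4$; thus with probability at least $3/4$ the sampled line is a valid $2/3$-separator of the claimed size, and the whole computation runs in $O(n)$ time. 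As in the remark following \corollaryref{cor:disjoint_unit}, the constant $3/4$ is immaterial and can be boosted by repetition.

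There is essentially no new obstacle here: all the heavy lifting — the grid-coloring argument, the bucketing of cliques by size, and the Cauchy--Schwarz estimate — has already been done inside \lemmaref{lem:joint_unit}, and the centerpoint machinery is exactly the same as in \corollaryref{cor:disjoint_unit}. The only point that warrants a sentence of care is the direction of the halfplane containment implication (center in halfplane vs.\ disk in halfplane), which gives the $2/3$ bound on disks rather than only on centers; this is already handled correctly in the disjoint case and carries over verbatim, so I would simply invoke the analogy and write ``In the same way as \corollaryref{cor:disjoint_unit} follows from \lemmaref{lem:disjoint_unit}'' and then restate the two steps for completeness.
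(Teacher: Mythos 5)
Your proposal is correct and takes essentially the same route as the paper, which proves \theoremref{thm:joint_unit} simply by noting that it follows from \lemmaref{lem:joint_unit} exactly as \corollaryref{cor:disjoint_unit} follows from \lemmaref{lem:disjoint_unit}: centerpoint of the disk centers for the $2/3$ balance, plus Markov's inequality on the expected number of intersected disks for the size and success probability. Your explicit spelling out of the centerpoint containment argument and the $O(n)$ randomized step is just a fuller writeup of the same argument, with no gap.
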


As before, the exact value of $3/4$ for the success probability is not important.

We here note that Lemma \ref{lem:joint_unit} 
also holds 
even when the radii of all disks range between $1$ and $2$.
Consequently,  Theorem \ref{thm:joint_unit} also holds in such a case.
To adopt the proof of Lemma \ref{lem:joint_unit},
we change the coloring of the grid cells.
Namely, we color with 36 colors for every $6\times 6$ block of grid cells,
so that no two disks that are associated with different grid cells of the same
color intersect.
Then, the same argument works, and we obtain the lemma.

We may also generalize the argument when the radii of the disks are arbitrary.
Let $\cal{D}$ be a set of disks of arbitrary radii.
Then, we classify the disks in $\cal{D}$ by their radii.
By scaling, we assume that the smallest radius of the disks is one.
Let $\Delta$ be the radius of a largest disk (or, before scaling,
the ratio between the smallest and the largest radius of the disks).
Then, for each $i  = 1,2,\dots,\lceil \log_2 \Delta \rceil$,
the set ${\cal D}_i \subseteq \cal{D}$ consists of 
the disks whose radii lie between $2^{i-1}$ and $2^i$.
By the discussion above, the expected number of disks intersected by
a random line through an arbitrary point $p$ is $O(\sqrt{(m_i+n_i)\log n_i})$,
where $n_i=|{\cal D}_i|$ and $m_i$ is the number of intersecting pairs
of disks in ${\cal D}_i$.
Therefore, such a random line intersects $O(\sum_{i=1}^{\lceil \log_2\Delta\rceil} \sqrt{(m_i+n_i) \log n_i})$ disks in $\cal{D}$ in expectation.
Note that
\[
 \sum_{i=1}^{\lceil \log_2\Delta\rceil} \sqrt{(m_i+n_i) \log n_i})
 = O(\sqrt{(m+n)\log n}\log \Delta),
\]
where $n=|\cal{D}|$ and $m$ is the number of intersecting pairs of
disks in $\cal{D}$.
By choosing a  centerpoint of disk centers as $p$,
we obtain the following corollary.
\begin{corollary}
\label{cor:arb_radii}
Let $\cal{D}$ be a set of $n$ disks of arbitrary radii in the plane with $m$ intersecting pairs.
Let $\Delta$ be the ratio of the smallest and the largest radii of the disks in $\cal{D}$.
Then, there exists a line $\ell$ that intersects $O(\sqrt{(m+n)\log n}\log \Delta)$ disks
of ${\cal D}$ such that each of the two open halfplanes bounded by $\ell$ contains
at most $2n/3$ disks of ${\cal D}$.
Moreover, such a line can be found in $O(n)$ time with probability at least $3/4$.
\end{corollary}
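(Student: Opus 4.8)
The plan is to split ${\cal D}$ into a logarithmic number of radius classes on which a mild variant of Lemma~\ref{lem:joint_unit} applies, to pass a single random line through a common point $p$, and then to conclude exactly as Corollary~\ref{cor:disjoint_unit} and Theorem~\ref{thm:joint_unit} were concluded from their respective lemmas.

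First I would rescale the plane so that the smallest radius in ${\cal D}$ equals~$1$; this changes neither the intersection graph, nor $m$, nor which disks a halfplane contains, and it makes $\Delta$ the largest radius. Partition ${\cal D}$ into ${\cal D}_1,\dots,{\cal D}_{\lceil\log_2\Delta\rceil}$, where ${\cal D}_i$ consists of the disks of radius in $[2^{i-1},2^i]$. Within a single class all radii agree up to a factor of~$2$, so after a further scaling by~$2^{i-1}$ the radii of ${\cal D}_i$ lie in $[1,2]$. For that radius range the proof of Lemma~\ref{lem:joint_unit} goes through after one adjustment: replace the $9$-coloring of each $3\times3$ block of $\sqrt2\times\sqrt2$ cells by the $36$-coloring of each $6\times6$ block, which still guarantees both that disks whose centers lie in distinct same-colored cells are disjoint and that each per-cell clique sits inside a disk of bounded radius, with these containing disks disjoint across cells. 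Hence, for any point $p$, a uniformly random line through $p$ meets $O(\sqrt{(m_i+n_i)\log n_i})$ disks of ${\cal D}_i$ in expectation, where $n_i=|{\cal D}_i|$ and $m_i$ is the number of intersecting pairs within ${\cal D}_i$.

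Next I would combine the classes. A single random line through $p$ serves all classes simultaneously, so by linearity of expectation it meets $\sum_i O(\sqrt{(m_i+n_i)\log n_i})$ disks of ${\cal D}$ in expectation. Bounding each summand by $O(\sqrt{(m+n)\log n})$ via $m_i\le m$ and $n_i\le n$, and using that there are at most $\lceil\log_2\Delta\rceil$ summands, gives the claimed bound $O(\sqrt{(m+n)\log n}\,\log\Delta)$; Cauchy--Schwarz together with $\sum_i n_i=n$ and $\sum_i m_i\le m$ would even yield the slightly better $O(\sqrt{(m+n)\log n\cdot\log\Delta})$, but the crude estimate suffices. Taking $p$ to be a centerpoint of the set of disk centers, every open halfplane bounded by a line through $p$ contains at most $2n/3$ of those centers, hence at most $2n/3$ disks of ${\cal D}$, since a disk lies in a halfplane only when its center does. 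For the algorithmic claim, compute a centerpoint in $O(n)$ time~\cite{DBLP:journals/dcg/JadhavM94}, draw a line through it uniformly at random, and apply Markov's inequality exactly as in the proof of Corollary~\ref{cor:disjoint_unit}: with probability at least $3/4$ the line meets only $O(\sqrt{(m+n)\log n}\,\log\Delta)$ disks, while the centerpoint property makes it a valid $2/3$-separator regardless.

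I do not expect a real obstacle. The only step that is not purely mechanical is the per-class use of Lemma~\ref{lem:joint_unit}, namely checking that a constant-size block coloring of the grid still decomposes a bounded-ratio class into per-cell cliques that are pairwise disjoint across cells --- which is precisely the $36$-color (i.e., $6\times6$-block) modification indicated just before the statement. Everything after that is linearity of expectation, a one-line term count, the defining property of a centerpoint, and Markov's inequality.
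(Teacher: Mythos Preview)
Your proposal is correct and follows essentially the same approach as the paper: rescale so the smallest radius is $1$, partition into $\lceil\log_2\Delta\rceil$ dyadic radius classes, apply the $36$-color variant of Lemma~\ref{lem:joint_unit} to each class, sum the per-class expectations via linearity, and take $p$ to be a centerpoint with Markov's inequality for the algorithmic claim. Your Cauchy--Schwarz remark even slightly sharpens the paper's crude termwise estimate, but the paper is content with the stated $O(\sqrt{(m+n)\log n}\,\log\Delta)$ bound.
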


\subsection{Axis-Parallel Separators}
In this section we show an alternative, more restricted separator. Specifically, we show that a line separator that intersects fewer disks ($O(\sqrt{m+n})$) exists, even if we restrict ourselves to axis-parallel lines. However, this comes at the cost that the balancing parameter is worsened:
on each side of the line we can certify only that there are at most $4n/5$ disks. 
The theorem below and its proof have a flavor similar to those by L\"offler and Mulzer \cite{DBLP:journals/jocg/LofflerM14}.

\begin{theorem}
\label{thm:axis-parallel}
Let $\cal{D}$ be a set of $n$ unit disks in the plane with $m$ intersecting pairs.  
Then, there exists an axis-parallel line $\ell$ that intersects $O(\sqrt{m+n})$ disks of $\cal{D}$ such that each of the two open halfplanes bounded by $\ell$ contains at most $4n/5$ disks of $\cal{D}$.
Moreover, such a line can be found in $O(n)$ time.
\end{theorem}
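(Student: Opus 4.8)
The plan is to follow the same two-stage template used in the previous proofs (first bound the expected/worst-case crossing number for a suitably chosen axis-parallel line, then combine with a balance argument), but the twist is that we must work with axis-parallel lines only, which rules out the centerpoint trick, and we must get rid of the $\log n$ factor. First I would set up a grid of $\sqrt{2}\times\sqrt{2}$ cells exactly as in \lemmaref{lem:joint_unit} and charge each disk to the cell containing its center; using the $9$-coloring of cells, each color class decomposes into cliques (disks in one cell) that are pairwise disjoint across cells, and the ``blow-up'' disks of radius $2$ around each clique are disjoint. The key observation that saves the logarithm is that we no longer need Cauchy--Schwarz over $\log n$ buckets: for axis-parallel lines we can afford a cruder, bucket-free estimate. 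Concretely, for a given color $j$ with clique-size vector summing to $n_j$, a horizontal (or vertical) line hits a cell-column (resp.\ row) of blow-up disks, and the number of distinct blow-up disks of that color a random axis-parallel line meets is $O(\sqrt{r_j})$ where $r_j$ is the number of cliques of color $j$ (this is the one-dimensional, deterministic analogue of \lemmaref{lem:disjoint_unit}: the blow-up disks of color $j$ live at points of a coarse grid, and a random axis-parallel line through a fixed point meets $O(\sqrt{r_j})$ of them). Each such clique contributes at most its size, and $\sum_j (\text{clique size})^2 \le 2(m+n)$ since each clique ${\cal C}$ has $|{\cal C}|^2 \le 2\binom{|{\cal C}|}{2}+|{\cal C}|$, so one application of Cauchy--Schwarz over the $9$ colors and the cliques (no $\log$ now, because we are not splitting by size) gives a crossing bound of $O(\sqrt{m+n})$.

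Next I would handle the balance requirement, which is where the factor $4/5$ (rather than $2/3$) enters and where I expect the main obstacle to lie. For axis-parallel lines a centerpoint is unavailable, so instead I would use the standard two-dimensional ham-sandwich-flavored argument à la L\"offler--Mulzer: project the $n$ disk centers onto the $x$-axis and onto the $y$-axis, and consider the vertical line $\ell_x$ through the median $x$-coordinate and the horizontal line $\ell_y$ through the median $y$-coordinate. These split the plane into four quadrants around the point $(\ell_x,\ell_y)$, each containing at most $n/2$ centers; a more careful count (each quadrant has between $n/2 - n/5$ and ... ) shows that at least one of $\ell_x,\ell_y$ can be perturbed among a linear-size candidate set of axis-parallel lines so that both sides have at most $4n/5$ centers \emph{and} the crossing number is within a constant of the expected value. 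The clean way to phrase this: among the $n$ vertical lines through centers and the $n$ horizontal lines through centers, one direction has the property that the $(n/5,4n/5)$-weighted band of that direction still contains $\Omega(n)$ candidate lines; averaging the crossing bound over those candidates and applying Markov yields an axis-parallel line that is simultaneously $4/5$-balanced and crosses $O(\sqrt{m+n})$ disks, which can then be made deterministic in $O(n)$ time by checking all $O(n)$ candidate lines.

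The main obstacle is the interaction between ``axis-parallel'' and ``balanced'': picking the single median line in one fixed direction gives a perfect $1/2$-balance but we have no control over how many disks it crosses, while averaging over all lines of a direction controls crossings but a worst-case line could be badly unbalanced. The $4/5$ bound is exactly the slack needed so that a constant fraction of the candidate lines in the better of the two directions are balanced, so that averaging the $O(\sqrt{m+n})$ crossing bound over that constant-fraction subfamily is still $O(\sqrt{m+n})$. I would make this precise by the pigeonhole/counting argument sketched above, being careful that ``balanced'' is stated for \emph{centers} (hence for disks contained in a halfplane) and that disks straddling $\ell$ are exactly the ones counted in the separator. Finally, the algorithmic claim follows because the grid, the $9$-coloring, the clique decomposition, the medians, and the scan over $O(n)$ candidate lines are all linear-time; there is no randomness left once we enumerate candidates, so the running time is deterministic $O(n)$, matching the theorem statement.
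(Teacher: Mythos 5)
Your proposal has two genuine gaps, and both sit exactly where the real difficulty of the theorem lies. First, the crossing bound in your stage one does not hold: there is no axis-parallel analogue of \lemmaref{lem:disjoint_unit}. That lemma crucially averages over \emph{all} directions through a point, so that far-away disks are hit with probability $O(1/i)$; through a fixed point there are only two axis-parallel lines, and a single one of them can cross $\Theta(r_j)$ pairwise disjoint blow-up disks (take all cliques centered on one vertical line: $m=0$, yet the vertical line through the centers crosses all $n$ disks). Even granting your claimed $O(\sqrt{r_j})$ bound on the number of crossed cliques, the Cauchy--Schwarz step does not close: with $|S|=O(\sqrt{r_j})$ crossed cliques and $\sum_t c_t^2 \le 2(m+n)$ you only get $\sum_{t\in S} c_t \le \sqrt{|S|}\sqrt{\sum_t c_t^2} = O(r_j^{1/4}\sqrt{m+n})$, not $O(\sqrt{m+n})$; controlling which clique sizes get crossed is precisely why \lemmaref{lem:joint_unit} needed the size buckets and paid the $\sqrt{\log n}$ factor. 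Second, your balance step asserts that in at least one direction the $(n/5,4n/5)$ band contains $\Omega(n)$ candidate lines; this is false. If all $n$ centers lie in a tiny square, both bands have width $O(1)$, contain $O(1)$ unit-spaced candidates, and every axis-parallel line in either band can cross $\Theta(n)$ disks, so averaging over candidates in the band cannot give $O(\sqrt{m+n})$ directly.

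The missing idea, and the heart of the paper's proof, is that a \emph{narrow} band is not a failure mode but a source of edges: the band boundaries $\ell_d,\ell_u,\ell_l,\ell_r$ enclose a rectangle $\cal R$ containing at least $n/5$ centers, and if every unit-spaced candidate line inside the bands crosses more than $\lambda\sqrt{m+n/10}$ disks, then (since each disk meets at most three horizontal and three vertical candidates, and disks centered outside the band miss the interior candidates) the band widths $H,V$ are forced to be small; hence $\cal R$ splits into few $\sqrt{2}\times\sqrt{2}$ cells, and the convexity bound of \lemmaref{lem:uniform} forces $m+n/10 \ge n^2/(50N)$ with $N$ small, which contradicts the assumed bound on $m$ for a suitable constant $\lambda$ (with a separate easy case when both $H<2$ and $V<2$, where $m=\Omega(n^2)$ and any line trivially crosses $O(n)=O(\sqrt{m})$ disks). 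So the correct argument is a deterministic counting/contradiction over the $O(H+V)$ candidate lines combined with the packing lemma, rather than any expected-crossing bound for random axis-parallel lines; your grid/coloring/clique machinery from \lemmaref{lem:joint_unit} is not the right tool here, and as written the proposal cannot be repaired without importing this narrow-band-implies-many-edges argument.
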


Before giving the proof, we first need to state a simple fact (whose proof follows from elementary analysis). 
\begin{lemma}
\label{lem:uniform}
Let $n,k$ be positive integers, and 
$f(x_1, \dots, x_k) = \sum_{i=1}^k x_i(x_i-1)/2$.
Then, $\min\{f(x_1, \dots, x_k) \mid (x_1,\dots,x_k) \in \mathbb{Z}^k, \sum_{i=1}^k x_i = n\} \geq n^2/(2k) - n/2$.
\end{lemma}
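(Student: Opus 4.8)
The plan is to discard the integrality constraint entirely: since we are after a lower bound and the integer tuples form a subset of all real tuples with $\sum_{i=1}^k x_i = n$, any bound valid over the reals is valid over $\mathbb{Z}^k$ as well. The first step is to separate the quadratic and linear parts of the objective,
\[
f(x_1,\dots,x_k) = \frac12\sum_{i=1}^k x_i^2 - \frac12\sum_{i=1}^k x_i = \frac12\sum_{i=1}^k x_i^2 - \frac n2,
\]
so that minimizing $f$ under $\sum_i x_i = n$ is equivalent to minimizing $\sum_i x_i^2$ under the same constraint.

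The second step is to bound $\sum_i x_i^2$ from below by the Cauchy--Schwarz inequality applied to the vectors $(x_1,\dots,x_k)$ and $(1,\dots,1)$:
\[
\Bigl(\sum_{i=1}^k x_i\Bigr)^{2} \le k\sum_{i=1}^k x_i^2,
\]
which gives $\sum_{i=1}^k x_i^2 \ge n^2/k$. (This is just convexity of $t\mapsto t^2$, i.e.\ the QM--AM inequality, with equality precisely when all $x_i$ equal $n/k$.) Substituting into the rewritten form of $f$ yields $f(x_1,\dots,x_k) \ge n^2/(2k) - n/2$ for every real tuple summing to $n$, hence for every integer tuple, which is the assertion of Lemma~\ref{lem:uniform}.

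There is essentially no obstacle here; the only point worth keeping in mind is that we only need the lower bound, so it is not a problem that the real minimizer $x_i = n/k$ is in general non-integral. If one wanted, integrality could be exploited (write $n = qk + r$ with $0\le r < k$ and balance the $x_i$ among the values $q$ and $q+1$) to get a marginally stronger constant, but the stated estimate already suffices for its use in the proof of Theorem~\ref{thm:axis-parallel}.
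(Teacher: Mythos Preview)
Your argument is correct and complete. The paper does not actually give a proof of this lemma; it simply states that the result ``follows from elementary analysis,'' which is precisely what your Cauchy--Schwarz (equivalently, convexity/QM--AM) computation supplies.
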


\begin{proof}[of Theorem \ref{thm:axis-parallel}]
Let $P$ be the set of disk centers and assume that there are no two points (centers) in $P$ with the same $x$-coordinate or the same $y$-coordinate. Let $\ell_d$ (resp., $\ell_u$) be a horizontal line such that there are exactly $n/5$ points of $P$ below it (resp., above it).
Let $H$ be the distance between~$\ell_d$ and~$\ell_u$.
Similarly, let $\ell_l$ (resp., $\ell_r$) be a vertical line such that there are exactly $n/5$ points of $P$ to its left (resp., to its right).
Let $V$ be the distance between $\ell_l$ and $\ell_r$. 
{From} the above definitions, it follows that any horizontal line $\ell$ between $\ell_d$ and $\ell_u$, and any vertical line $\ell$ between $\ell_l$ and $\ell_r$ have at most $4n/5$ disks of $\cal{D}$ on each of the two open halfplanes bounded by $\ell$. We will show that one of these lines intersects $O(\sqrt{m+n})$ disks of $\cal{D}$.

Consider the rectangle $\cal{R}$ defined by $\ell_d$, $\ell_u$, $\ell_l$ and $\ell_r$.
Now let $h_i$ be the horizontal line above $\ell_d$ whose distance from $\ell_d$ is exactly $i$, for $i=1,\ldots,\lceil H-1\rceil$, and similarly,
let $v_i$ be the vertical line to the right of $\ell_l$ whose distance from $\ell_l$ is exactly $i$, for $i=1,\ldots,\lceil V-1 \rceil$.

We have a case analysis.
Assume first that $H < 2$ and $V<2$.
Then, there exist a horizontal line between $\ell_d$ and $\ell_u$, and a vertical line between $\ell_l$ and $\ell_r$ that intersect at least $n/5$ disks whose centers are in $\cal{R}$. What we need to show is $m=\Omega(n^2)$ as any line intersects at most $O(n)$ disks. We partition $\cal{R}$ into at most $4$ rectangles by $h_i$ and $v_i$ so that each side of these rectangles is at most $\sqrt{2}$. By Lemma~\ref{lem:uniform} with $k \leq 4$, we have $m \geq n^2/200 - n/10 = \Omega(n^2)$.

Next we assume that $H \geq 2$ and $V \geq 2$.
Then, no disk with center below $\ell_d$ or above $\ell_u$ (resp., left to $\ell_l$ or right to $\ell_r$) intersects $h_i$ for $i=1, \ldots, \lfloor H-1 \rfloor$ (resp., $v_i$ for $i=1, \ldots, \lfloor V-1 \rfloor$).  The number of centers in $\cal{R}$ is at least $n/5$ and any disk whose center is in $\cal{R}$ is intersected by 
at most three of the horizontal lines $h_i$ and by 
at most three of the vertical lines $v_i$.

Now, if one (or more) of the lines $h_i$ or one (or more) of the lines $v_i$ intersects at most $\lambda \sqrt{m+n/10}$ disks, for some constant $\lambda$ that we specify below, then we are done.
Assume therefore that each of the lines $h_i$ and each of the lines $v_i$ intersects more than $\lambda \sqrt{m+n/10}$ disks.
Since no disk with center below $\ell_d$ or above $\ell_u$ (resp., left to $\ell_l$ or right to $\ell_r$) intersects $h_i$ for $i=1, \ldots, \lfloor H-1 \rfloor$ (resp., $v_j$ for $j=1, \ldots, \lfloor V-1 \rfloor$), 
we have 
\[
\lambda \sqrt{m+n/10} \lfloor H-1 \rfloor  \leq 3\left(\frac{3n}{5}\right), 
\quad
\text{and}
\quad
\lambda \sqrt{m+n/10} \lfloor V-1 \rfloor  \leq 3\left(\frac{3n}{5}\right);
\]
otherwise we get that the number of disks is greater than $n$. 

We partition $\cal{R}$ into $N$ rectangles 
using $h_i$ and $v_j$ for $1 \leq i \leq \lceil H-1 \rceil$ and $1 \leq j \leq \lceil V-1 \rceil$
so that each side of these rectangles is at most $\sqrt{2}$. 
Then, the disk centers in $\cal{R}$ are distributed into $N$ rectangles, and the disks with centers in the same rectangle form a clique.
Let $x_i$ be the number of centers in one of the $N$ rectangles, for $i=1,\ldots, N$, where $\sum_{i=1}^N x_i \geq n/5$.
By Lemma~\ref{lem:uniform} with $k=N$, the number of intersecting pairs of disks whose centers lie in $\cal{R}$ is at least $n^2/(50N) - n/10$.  Then, we have 
\[
m + \frac{n}{10} \geq \frac{n^2}{50N}.
\]
On the other hand, we have 
\[
N = \lceil H \rceil \lceil V \rceil \leq (\lfloor H -1 \rfloor + 2) (\lfloor V -1 \rfloor + 2) \leq 9 (\lfloor H -1 \rfloor) (\lfloor V -1 \rfloor)
\leq 81 \left(\frac{3n}{5}\right)^2 \cdot \frac{1}{\lambda^2 (m+n/10)}.
\]
Therefore, we conclude that
\[ m + \frac{n}{10}
\geq \frac{n^2}{50} \cdot \frac{25}{729n^2} \cdot \lambda^2 \left(m+\frac{n}{10}\right)
= \frac{1}{1458} \lambda^2 \left(m+\frac{n}{10}\right)
> m+\frac{n}{10},
\]
assuming $\lambda > \sqrt{1458}$.
This is a contradiction.

Finally we assume that $H \geq 2$ and $V < 2$ (the case where $V \geq 2$ and $H < 2$ is symmetric).
We argue as in the previous case.
If one (or more) of the lines $h_i$ intersects at most $\lambda \sqrt{m+n/10}$ disks, for some constant $\lambda$ that we specify below, then we are done.
Assume therefore that each of the lines $h_i$ intersects more than $\lambda \sqrt{m+n/10}$ disks.
Then, we have 
\[
\lambda \sqrt{m+n/10} \lfloor H-1 \rfloor  \leq 3\left(\frac{3n}{5}\right),
\]
and if we partition $\cal{R}$ into $N$ rectangles 
using $h_i$ and $v_j$ for $1 \leq i \leq \lceil H-1 \rceil$ and $j \leq \lceil V-1 \rceil$
so that each side of these rectangles is at most $\sqrt{2}$, then we have
\[
N 
\leq 2 \lceil H \rceil  \leq 2 (\lfloor H -1 \rfloor + 2) \leq 6 (\lfloor H -1 \rfloor) \leq \frac{54n}{5\lambda\sqrt{m+n/10}}.
\]
Again, as in the previous case, we have
\[
m + \frac{n}{10} \geq \frac{n^2}{50N} \geq \frac{n^2}{50} \cdot \frac{5 \lambda \sqrt{m+\dfrac{n}{10}}}{54n} >  \frac{\lambda\sqrt{2}}{540} \cdot \frac{n \sqrt{m+\dfrac{n}{10}}}{\sqrt{2}} > \frac{\sqrt{2}\lambda}{540} \left(m +\frac{n}{10}\right) > m+\frac{n}{10},
\]
assuming $\lambda > 540/\sqrt{2}$.
The second to last inequality follows from the upper bound on $m$ in the complete graph, i.e.\ $m \leq n(n-1)/2 \Rightarrow \sqrt{m+n/10} < n/\sqrt{2}$.
Thus, we have reached a contradiction.
\qed
\end{proof}

\section{Almost Tightness of Our Approach}
\label{sec:lowerbound}

We now show that the approach used in Theorem~\ref{thm:joint_unit} 
with a centerpoint
cannot be drastically improved. 
Specifically, we present a family~${\cal D}$ of $n$ unit disks
with a centerpoint~$p$ of the centers of those unit disks
such that any line that passes through $p$
will intersect many disks. 
Although this example can be constructed for any number of disks $n>0$ and 
any desired number of intersecting pairs $m>n$, the details are a bit tedious.
Instead, given the desired values $n$ and $m$, we find $n'\approx n$ and $m'\approx m$ that satisfy the properties. This greatly simplifies the proof and, asymptotically speaking, the bounds are unaffected.

\begin{theorem}\label{thm:lowerjoint_unit}
For any $n,m\in \mathbb{N}$ such that $9n \leq m \leq \lfloor n^2/6 \rfloor$,
there exist $n', m' \in \mathbb{N}$, where $n \leq n' \leq 2n$ and $\lceil m/9 \rceil \leq m' \leq 6m$,
and a set $\cal{D}$ of $n'$ unit disks in the plane with $m'$ intersecting pairs
that have the following property.
There exists a centerpoint $p$ of the centers of unit disks in $\cal{D}$
such that any line $\ell$ that passes through $p$
intersects $\Omega( \sqrt{m\log (n/\sqrt{m})})$ disks of $\cal{D}$.
\end{theorem}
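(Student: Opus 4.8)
The plan is to realize the worst case by a \emph{hierarchical grid of cliques} centered at $p$. Fix a parameter $t=\Theta(\log(n/\sqrt m))$ and a tiny constant $\eps>0$, and take the axis-parallel grid whose cells are squares of side $2-2\eps$, shifted so that $p$ is a common corner of four cells rather than the centre of one. For $i=1,\dots,t$ let the annulus $A_i$ consist of all grid cells whose centre has Chebyshev distance from $p$ in the interval $(2^{i-1},2^i]$ (after rescaling); then $A_i$ has $L_i=\Theta(4^i)$ cells, the $A_i$ are pairwise disjoint, and the whole configuration is centrally symmetric about $p$. Inside each cell of $A_i$ place a cluster of $s_i$ unit disks whose centres lie in an $\eps$-neighbourhood of the cell centre (perturbed so that they are distinct); each cluster is a clique, so the intersecting pairs are the $\binom{s_i}{2}$ pairs inside each cluster plus some pairs between clusters in neighbouring cells, and the latter only inflate the total by a constant factor. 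Writing $n'=\sum_i L_i s_i$ and $m'=\Theta(\sum_i L_i s_i^2)$, central symmetry makes $p$ a centerpoint (indeed a halving point) of the set of disk centres, so the only thing left is the stabbing bound.

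The key claim is: every line $\ell$ through $p$ meets $\Omega(\sqrt{L_i}\,s_i)$ disks inside $A_i$, for each $i$, and hence meets $\Omega(\sum_{i=1}^t \sqrt{L_i}\,s_i)$ disks of ${\cal D}$ in total. If $\ell$ has slope at most $1$, then in each vertical column of the grid $\ell$ passes within vertical distance at most $(2-2\eps)/2 = 1-\eps$ of exactly one cell centre, hence within distance $1$ of all $s_i$ clustered disks there, so it stabs that whole cluster; moreover, because the slope is at most $1$, the stabbed cell lies in the same annulus as its column, so $\ell$ stabs a full cluster in each of the $\Theta(2^i)=\Theta(\sqrt{L_i})$ columns that meet $A_i$. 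Slopes larger than $1$ are handled symmetrically using rows. (It is precisely the choice of cell side $2-2\eps$, strictly below $2$, that gives the slack needed here; with side $2$ or larger the line could miss the clustered disks for adversarial slopes.)

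It remains to choose the parameters to match $n$ and $m$ while making $\sum_i\sqrt{L_i}\,s_i$ as large as possible. Balancing the ``clique budget'' across the annuli, i.e.\ keeping $L_i s_i^2\approx m'/t$, forces $s_i=\Theta(\sqrt{m'/t}\,2^{-i})$; the disk-count equation $\sum_i L_i s_i = n'$ then yields $2^t=\Theta(n'/\sqrt{m'})$ up to lower-order terms, i.e.\ $t=\Theta(\log(n'/\sqrt{m'}))$; and substituting back, a Cauchy--Schwarz-type computation identical in spirit to the one in Lemma~\ref{lem:joint_unit} gives $\sum_i\sqrt{L_i}\,s_i=\Theta(\sqrt{m'\,t})=\Omega\!\left(\sqrt{m\log(n/\sqrt m)}\right)$. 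The hypotheses are exactly what make this consistent: $m\le\lfloor n^2/6\rfloor$ guarantees $t\ge 1$, and $m\ge 9n$ guarantees that the smallest cluster size $s_t$ stays $\ge 1$; rounding $t$, the $L_i$ and the $s_i$ to integers (or powers of two) changes $n$ and $m$ only by constant factors, which is why the statement allows $n\le n'\le 2n$ and $\lceil m/9\rceil\le m'\le 6m$ instead of equalities.

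The real work --- the ``tedious'' part alluded to --- is in the bookkeeping rather than in any single idea: pinning down the annulus boundaries so that a slope-$\le 1$ line provably crosses $\Omega(\sqrt{L_i})$ cells of $A_i$ and does not leak into the wrong annulus; bounding and absorbing the inter-cluster intersections, both between neighbouring cells of the same annulus and across adjacent annuli, into the $m'\le 6m$ slack; and carrying the integrality constraints through the parameter computation. Each step is routine, but getting all the constants to line up simultaneously is what the proof has to spend its effort on.
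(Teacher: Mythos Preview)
Your construction is correct (the grid/annulus stabbing argument and the parameter balancing both check out, modulo the bookkeeping you already flag), but it is a genuinely different route from the paper's. The paper does \emph{not} use a grid of cliques with varying sizes; instead it places the \emph{same} number $k$ of unit disks uniformly on each of $\ell=\lceil n/k\rceil$ concentric circles of linearly spaced radii $2(1+\eps),4(1+\eps),\dots,2\ell(1+\eps)$, choosing $k$ as the least integer with $k\ge\sqrt{6m/(1+\ln(n/k))}$. Disks on different circles never meet, and on circle $i$ each disk overlaps $\Theta(k/i)$ neighbours, so the intersection count per layer is $\Theta(k^2/i)$ and the stabbing count per layer is $\Theta(k/i)$; summing the harmonic series gives $m'\approx k^2\ln\ell$ and stabbing $\approx k\ln\ell=\Theta(\sqrt{m\log(n/\sqrt m)})$. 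In other words, the paper fixes the layer population and lets the intersection budget per layer decay harmonically, whereas you fix the intersection budget per annulus (equalising $L_r s_r^2$) and let the cluster sizes decay geometrically. Your version has the pleasant feature of being visibly the equality case of the Cauchy--Schwarz step in Lemma~\ref{lem:joint_unit}, which makes the near-matching of upper and lower bounds transparent; on the other hand the paper's construction is governed by a single parameter~$k$, has no inter-cluster intersections to absorb, and reduces the final estimates to short harmonic sums, so its constant-tracking is lighter than what your last paragraph anticipates.
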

\begin{proof}
Given $n$ and $m$ we choose $k\in \mathbb{N}$ as the smallest natural number 
$k'$ such that $k' \geq \sqrt{\frac{{6m}}{{1+\ln (n/k')}}}$.
First observe that such $k$ exists and satisfies $k\leq n$. 
Indeed, let $f(x)=\sqrt{\frac{6m}{1+\ln{(n/x)}}}$. 
Since $f(x)$ is monotonically increasing for $x \geq 1$, 
it suffices to show that $n-f(n) \geq 0$. This holds as
$$
n-f(n)  \geq 0 
\Leftrightarrow 
n-\sqrt{6m}  \geq 0
\Leftrightarrow
 n^2/6 \geq m,$$
and the last inequality is true by our assumed bounds on $m$.

Let $\ell=\lceil n/k \rceil$, and consider a sufficiently
small positive real number \mbox{$\varepsilon \leq \frac{1}{2\pi}$}.
Consider now the $\ell$ concentric circles $C_i$ centered at the origin with radius \mbox{$2i(1+\varepsilon)$} for $i=1, \dots, \ell$.
On each such circle, we place $k$ unit disks uniformly (i.e., the arc spacing between the centers of two consecutive disks is $4\pi i(1+\varepsilon)/k$).
Let $\cal{D}$ be the collection of these disks, see \figurename~\ref{fig:lowerBound}, left. 

Let $n' = k\ell$.
By construction, $\cal{D}$ has $n'$ disks,
and $n \leq n' \leq n+k \leq 2n$ as claimed.
Partition the disks of $\cal{D}$ into $\ell$ layers ${\cal D}_1, \dots, {\cal D}_{\ell}$ depending on which concentric circle their center lies on.
Since we placed unit disks on circles that are $2(1+\varepsilon)$ units apart, only disks that belong to the same layer may have nonempty intersection.
We now show that the number $m'$ of intersecting pairs in $\cal{D}$
satisfies $\lceil m/9 \rceil \leq m' \leq 6m$.

Let $x_i$ denote the number of intersecting pairs of disks on the $i$-th layer, 
and let $\gamma_i$ be the 
 arc length of $C_i$ such that two unit disks centered at two endpoints of the arc touch each other (see \figurename~\ref{fig:lowerBound}, middle).
Note that two disks on the same layer overlap if and only if the arc distance between the centers is $\gamma_i$ or less, and disks in different layers cannot intersect. 

Since the disks are evenly spaced on $C_i$, each disk of ${\cal D}_i$ intersects $2\lfloor \frac{\gamma_i}{4\pi i(1+\varepsilon)/k} \rfloor $ other disks. Taking into account that there are $k$ disks in ${\cal D}_i$ (and each crossing is counted twice), we obtain that $x_i= \lfloor\frac{\gamma_ik}{4\pi i(1+\varepsilon)}\rfloor k$ and thus the total number of intersecting pairs is $m'=\sum_{i=1}^{\ell} x_i = \sum_{i=1}^{\ell} \lfloor\frac{\gamma_ik}{4\pi i(1+\varepsilon)}\rfloor k$.

By using some trigonometric properties and picking a sufficiently small $\varepsilon$, we can show that, regardless of the value of $i$, we have 
$$2 < \gamma_i = 4i(1+\varepsilon)\sin^{-1}(1/2i(1+\varepsilon)) < 2\pi/3.$$

We now bound $m'$ from above.
Since $\gamma_i< 2\pi/3$, we have 
$$m' < \sum_{i=1}^{\ell} \frac{ k^2}{6 i(1+\varepsilon)} < \frac{k^2}{6}\sum_{i=1}^{\ell} \frac{1}{i} < k^2(1+\ln \ell)/6 < k^2(1+\ln (2n/k))/6.$$ Recall that we choose $k$ to be the smallest natural number $k'$ such that $k' \geq \sqrt{\frac{{6m}}{{1+\ln (n/k')}}}$. 
We combine this bound with the assumption that $n<m$ to obtain the desired bound:
\begin{align*}
 m' &< k^2\frac{(1+\ln (2n/k))}{6} 
     < \left(1+\sqrt{\frac{{6m}}{{1+\ln (n/k)}}}\right)^2 \left(\sqrt{\frac{(1+2\ln (n/k))}{6}}\right)^2 \\
    &< \left(\sqrt{\frac{1+2\ln (n/k)}{6}}+\sqrt{2m} \right)^2 <\left(\sqrt{\frac{6n}{6}}+\sqrt{2m} \right)^2 \\
    &<  (\sqrt{m}+\sqrt{2m})^2 =(3+2\sqrt{2}) m.
\end{align*}

For the lower bound we now use that $\gamma_i>2$ (and $\varepsilon \leq \frac{1}{2\pi}$). Recall that  
\begin{align*}
m' &= \sum_{i=1}^{\ell} \left\lfloor\frac{\gamma_ik}{4\pi i(1+\varepsilon)}\right\rfloor k >  \sum_{i=1}^{\ell} \left( \frac{k}{2\pi i(1+\varepsilon)} -1\right) k 
 >  \frac{k^2}{2\pi + 1} \left(\sum_{i=1}^{\ell} \frac{1}{i}\right)
 - \left(\sum_{i=1}^{\ell} k\right) > \frac{k^2 \ln(l+1)}{2\pi + 1} - n'.
\end{align*}
 Recall that, by choice of $k$, we have $k \geq \sqrt{\frac{{6m}}{{1+\ln (n/k)}}}$ and in particular $k \geq \sqrt{\frac{6m}{1+\ln \ell}}$. Combining this fact with $\ln(\ell+1) \geq (1+\ln \ell)/2$ and $2m/9 \geq n'$ gives us the lower bound as follows:
$$
m' \geq \left(\frac{6m}{1+\ln \ell }\right) \left(\frac{\ln(\ell+1)}{2\pi + 1}\right) - n' 
    \geq \frac{3m}{2\pi + 1} - \frac{2m}{9} > \frac{3m}{9} - \frac{2m}{9}=\frac{m}{9}.
$$
That is, for any values of $n$ and $m \in \{9n,\dots,\lfloor n^2/6 \rfloor\}$, we can construct a set of $n'$ unit disks with $m'$ intersecting pairs.

By symmetry, we can see that
the origin is a centerpoint of the centers of disks in $\cal{D}$.
Thus, it remains to show that any line that passes through the origin must intersect many disks of ${\cal D}$. In the following we show something stronger:
any ray emanating from a point $p$ inside $C_1$ will cross $\Omega( \sqrt{m\log (n/\sqrt{m})})$ disks.

We count the number of intersections on each layer independently. Each unit disk in ${\cal D}_i$ covers $\frac{\gamma_i}{2i(1+\varepsilon)}$ radians of $C_i$. Since there are $k$ evenly spread disks in ${\cal D}_i$ and $\varepsilon \leq \frac{1}{4\pi}$, each point of the circle $C_i$ is contained in at least $\lfloor \frac{\gamma_i k}{2i(1+\varepsilon)} \cdot \frac{1}{2\pi} \rfloor > \frac{\gamma_i k}{i(4\pi +1)} - 1$ disks.

By substituting $k \geq \sqrt{\frac{6m}{1+\ln \ell}}$, $\gamma_i > 2$, $\ln(\ell+1) \geq (1+\ln \ell)/2$, $\ell = n'/k$ and $2m/9 \geq n'$, we obtain that the number of disks in ${\cal D}$ intersected
by the ray is at least
\begin{align*}
\sum_{i=1}^{\ell} (\frac{\gamma_i k}{i(4\pi +1)} - 1) 
 & >    \frac{2}{4\pi + 1}\left(\sqrt{\frac{6m}{1+\ln \ell }}\right)\left( \sum_{i=1}^{\ell} \frac{1}{i}\right) - \ell\\
 & \geq \frac{2}{4\pi + 1} \left(\sqrt{\frac{6m}{1+\ln \ell}}\right)\ln(\ell +1) - \frac{n' \sqrt{1+\ln \ell}}{\sqrt{6m}}\\
 & \geq \frac{\sqrt{6m (1+\ln \ell )}}{4\pi + 1} - \frac{2m}{9} \sqrt{\frac{1+\ln \ell}{6m}}\\
 & \geq \left(\frac{\sqrt{6}}{4\pi + 1} - \frac{2}{9\sqrt{6}}\right) \sqrt{m(1+\ln \ell)} =  \Omega( \sqrt{m\log (n/\sqrt{m})}).
\tag*{\qed}
\end{align*}
\end{proof}

\begin{figure}[t]
  \centering
  \scalebox{.4}{\includegraphics[page=3]{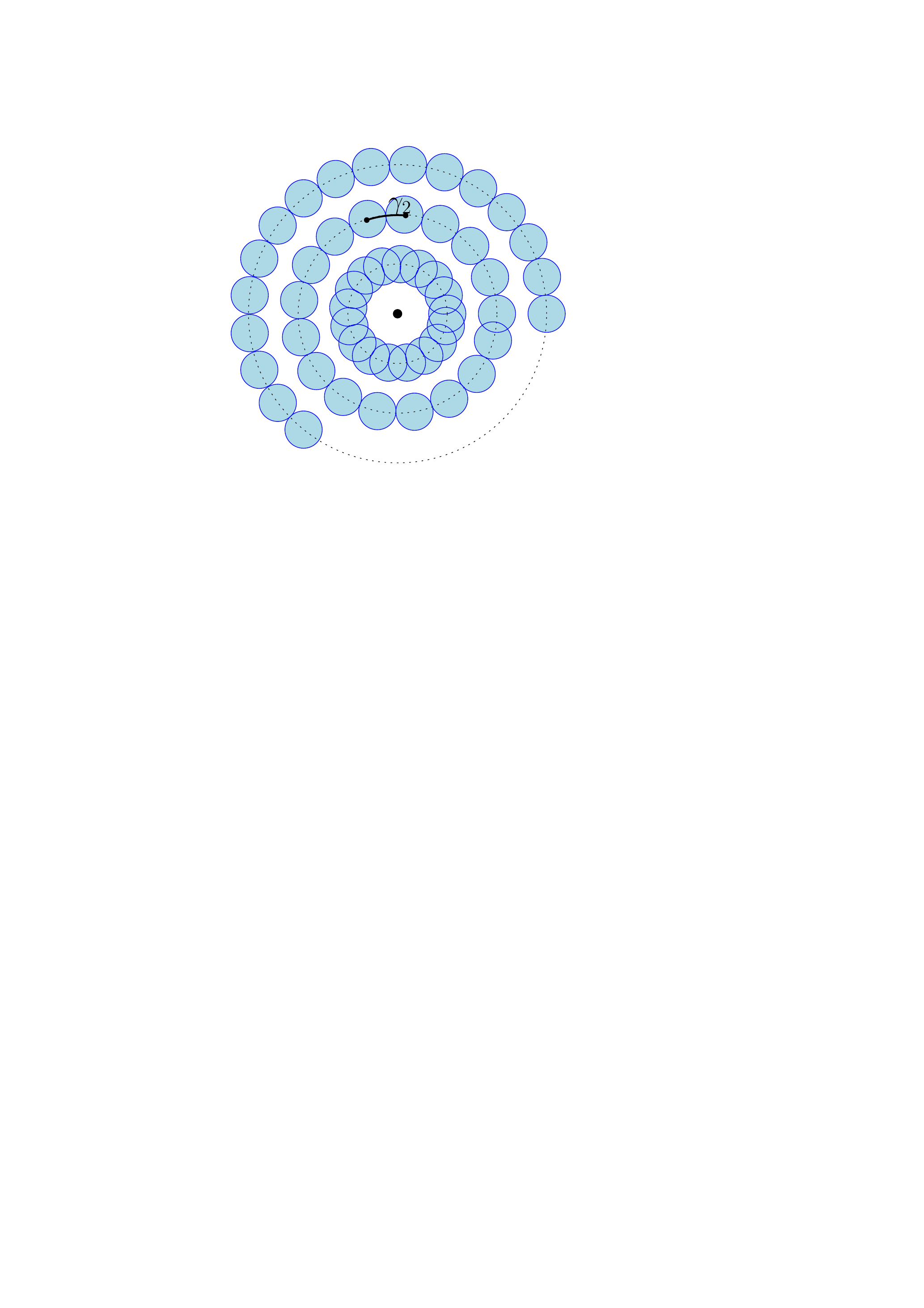}}
  \qquad
  \scalebox{.5}{\includegraphics[page=2]{lowerBound}}
  \qquad
  \scalebox{.25}{\includegraphics{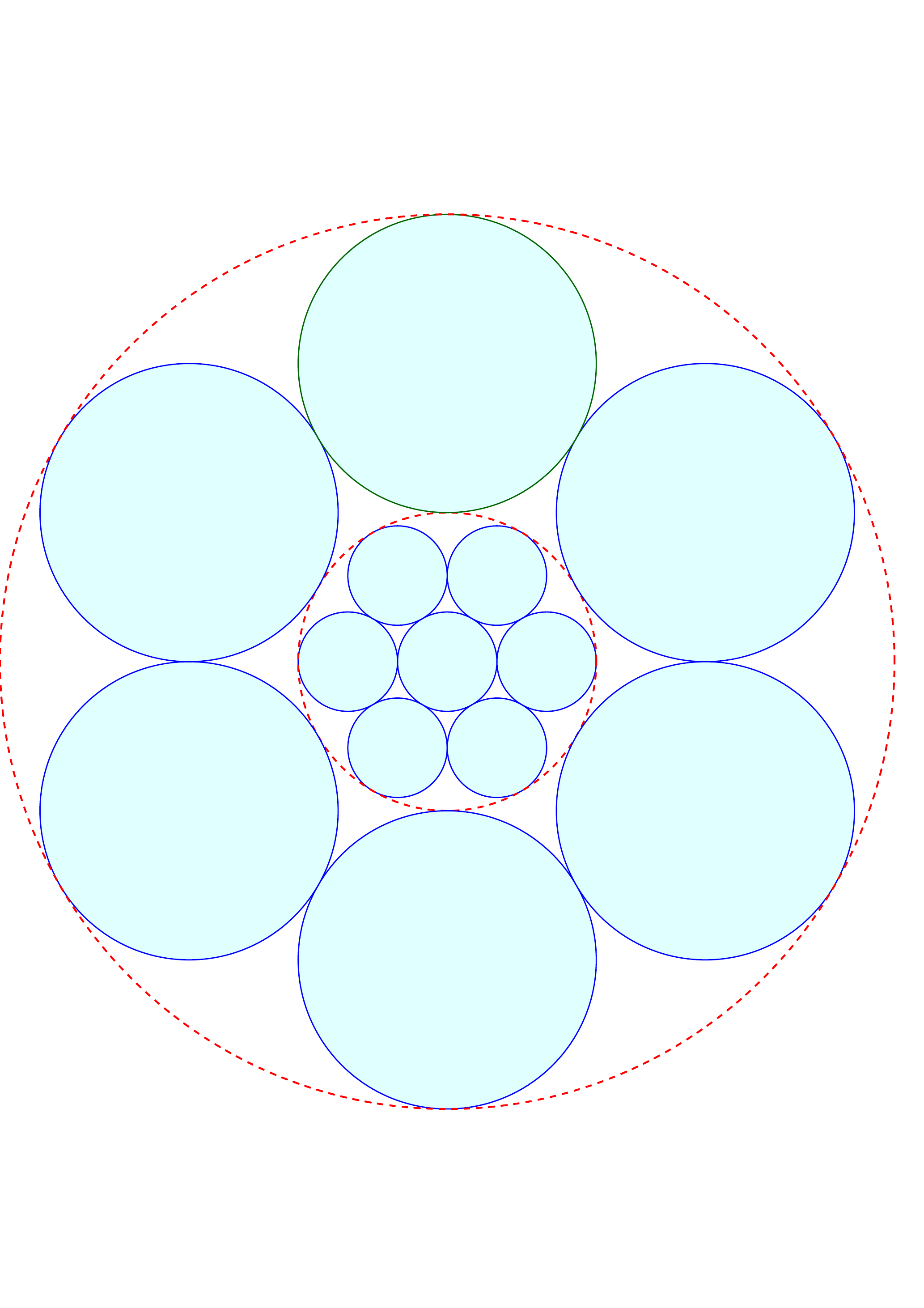}}
  \caption{
    Left: Almost tightness construction for $\ell=3$ (and $k=16$). 
    Middle: upper and lower bounds for $\gamma_i$. 
    Right: construction for disks of arbitrary radii.
  }
  \label{fig:lowerBound}
\end{figure}

We have shown in Theorems \ref{thm:joint_unit} and \ref{thm:axis-parallel} that a good line separator can be found for unit disks. However, we show next that this is not possible when considering disks of arbitrary sizes.

\begin{theorem}\label{thm:lowerarbitRad}
For any $n\in \mathbb{N}$ there exists a set $\cal{D}$ of $O(n)$ pairwise disjoint disks such that any line $\ell$ for which both halfplanes contain $\Omega(n)$ disks of $\cal{D}$ also 
intersects $\Omega(n)$ disks of $\cal{D}$.
\end{theorem}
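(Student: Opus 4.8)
The plan is to build a ``nested sunflower'' of $\Theta(n)$ pairwise disjoint disks that, seen from the center of the configuration, blocks almost every direction at $\Theta(n)$ different scales (cf.\ the rightmost picture in \figurename~\ref{fig:lowerBound}), so that no line can cut into the configuration deeply enough to balance it without stabbing a constant fraction of the disks. Fix a large constant $\Lambda$ and set $N=\lfloor n/3\rfloor$. For $i=1,\dots,N$ let $C_i$ be the circle of radius $R_i=\Lambda^{i}$ about the origin, and place on $C_i$ three disks of radius $\rho_i$ slightly less than $\tfrac{\sqrt3}{2}R_i$, with centers equally spaced at $120^\circ$. Consecutive centers on $C_i$ lie at distance $\sqrt3\,R_i>2\rho_i$, so the three disks of level~$i$ are pairwise disjoint; each level-$i$ disk lies in $\{x:\tfrac1{10}R_i<|x|<2R_i\}$, and $2R_i<\tfrac1{10}R_{i+1}$ for $\Lambda$ large, so disks of different levels are disjoint as well. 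Thus $\mathcal{D}$ is a family of $3N=\Theta(n)$ pairwise disjoint disks ($m=0$), symmetric about the origin.

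The first key point is that the three level-$i$ disks leave only a tiny ``shadow''. From the origin each level-$i$ disk blocks exactly the cone of directions within $\arcsin(\rho_i/R_i)$ of its center direction, and this angle is just below $\arcsin(\tfrac{\sqrt3}{2})=60^\circ$; hence the three cones cover all directions except three gap arcs, which can be made as short as we please by pushing $\rho_i$ towards $\tfrac{\sqrt3}{2}R_i$, and which by construction form a set $G_i$ invariant under rotation by $120^\circ$. The crucial consequence is that $G_i$, being three arcs each far shorter than $60^\circ$, cannot contain two antipodal directions: that would force $G_i$ to contain a whole $60^\circ$-orbit, hence an arc of length at least $60^\circ$.

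Now take any line $\ell$ at distance $\delta$ from the origin. If $R_i\le\delta/2$ then every level-$i$ disk lies within $|x|<2R_i\le\delta$, hence entirely on the open side of $\ell$ containing the origin; call such levels \emph{hidden}. If instead $R_i\ge 100\,\delta$, then $\ell$ passes through the annulus $\{x:\tfrac1{10}R_i<|x|<2R_i\}$ along two sub-segments whose directions stay within a small absolute constant $\varepsilon_0$ of the two antipodal directions $\pm v$ of $\ell$; each such sub-segment misses all three level-$i$ disks only if its direction lies in the $\varepsilon_0$-neighbourhood of $G_i$ — still a $120^\circ$-symmetric union of three arcs of length $\ll 60^\circ$ — so by the previous paragraph at most one of the two sub-segments can miss, i.e.\ $\ell$ stabs a level-$i$ disk on every such \emph{exposed} level. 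Only $O(1)$ levels are neither hidden nor exposed (those with $R_i$ in the geometric window between $\delta/2$ and $100\,\delta$). Finally, if $\ell$ has $\Omega(n)$ disks in each open halfplane, then the hidden levels number at most $(1-\varepsilon)N$ (otherwise one side would hold at least $3(1-\varepsilon)N=(1-\varepsilon)n$ disks and the other at most $\varepsilon n$); hence $\ell$ has at least $\varepsilon N-O(1)=\Omega(n)$ exposed levels and stabs a disk on each, so $\ell$ intersects $\Omega(n)$ disks of $\mathcal{D}$, as required.

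The main obstacle is precisely the difficulty flagged right after the theorem statement: a family of pairwise disjoint disks can never cover all directions from an interior point, so the shadows $G_i$ are unavoidable. What rescues the construction is the interplay between two features: a line crosses each concentric annulus at two nearly antipodal directions, while each $G_i$ — being $120^\circ$-rotationally symmetric with short arcs — contains no antipodal pair. This is what forces the threefold symmetry and the geometric spacing $R_i=\Lambda^{i}$; everything else (making ``$\varepsilon_0$-nearly radial'' quantitative, and checking that the $O(1)$ borderline levels and the constants cause no harm) is routine bookkeeping.
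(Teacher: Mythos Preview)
Your proof is correct and follows the same overall strategy as the paper: build concentric layers of disks at exponentially growing radii so that any balanced line must penetrate $\Omega(n)$ layers and pick up a disk at each. The paper places \emph{six} disks per layer---the standard hexagonal ring around the smallest enclosing disk $D_i$ of the previous layer---so that after the $(1{-}\varepsilon)$-shrinking their shadows from the origin cover all but a negligible set of directions; the paper can then simply assert that ``for every two intersections we can find at least one disk that intersects with~$\ell$.'' Your variant uses only \emph{three} disks per layer, which leaves substantial angular gaps $G_i$, and you compensate with the neat observation that a $120^\circ$-symmetric union of three short arcs cannot contain an antipodal pair, so at least one of the two (nearly antipodal) crossings of $\ell$ through the level-$i$ annulus must land on a disk. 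This makes the ``line hits each exposed layer'' step more explicit than the paper's rather terse justification, at the price of the extra antipodal lemma; the paper's six-disk ring trades that lemma for a denser construction in which the step is essentially immediate.
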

\begin{proof}
Our construction is similar to the one of Theorem~\ref{thm:lowerjoint_unit}: let ${\cal C}_1$ be the set containing the unit disk centered at the origin. For $i>1$, let $D_i$ be the smallest enclosing disk containing all disks of ${\cal C}_{i-1}$. The set ${\cal C}_i$ (for $i>1$) is defined as the set containing the six disks of the same radius as $D_i$ surrounding $D_i$, such that each two consecutive disks touch and none overlap. Note that $D_i\not\in {\cal C}_{i}$ and ${\cal C}=\bigcup_{i=1}^n {\cal C}_i$. 

Observe that in this construction disks of one level touch only neighbors on the same level (and on the next and previous level). We fix this by rotating the construction so that disks in ${\cal C}_i$ are tangent to $D_i$ but do not intersect with the disks of ${\cal C}_{i-1}$ (see Figure~\ref{fig:lowerBound} right). Tangencies can afterwards be removed by shirking the disks by a factor of $1-\varepsilon$ for a sufficiently small value of $\varepsilon$. We also shrink the disk in ${\cal C}_1$ by this factor of $1-\varepsilon$. After these two changes the disks will be pairwise disjoint.

The argument is now similar to the one in Theorem~\ref{thm:lowerjoint_unit}, and the centers of the disks in $\cal C$ are placed in concentric circles around the origin. The main difference now is that each layer has a constant number of disks. Thus, any line $\ell$ that makes a balanced separator must cross $\Omega(n)$ concentric circles. For every two intersections we can find at least one disk that intersects with $\ell$, and thus the $\Omega(n)$ lower bound follows. \qed
\end{proof}

Note that in this construction the radii of the disks grow at an exponential rate. 
Therefore, our upper bound in Corollary \ref{cor:arb_radii} is nearly optimal
also in this sense.

\section{Experiments}
\label{sec:experiment}

In our experiments, we evaluate the quality of separator algorithms
by their separator size. Theorem~\ref{thm:joint_unit} suggests a simple algorithm: find a centerpoint (which can be done in linear time~\cite{DBLP:journals/dcg/JadhavM94}) and try random lines passing through that point until a good separator is found. Since implementing the centerpoint algorithm is not trivial, we use an alternative method that is asymptotically slower but much easier to implement: for a slope $a$ selected uniformly at random find a $2/3$-separator with slope $a$ that intersects the minimum number of disks (this step can be done in $O(n \log n)$ time by sorting the disks in orthogonal direction of $a$ and making a plane sweep). 
 Clearly, a separator found by the modified algorithm intersects at most as many disks as the line of the same slope passing through a centerpoint. Thus, as in Theorem~\ref{thm:joint_unit}, a random direction will be good with positive probability. 
We repeat this process many times to obtain a good line separator.

We compare our algorithm with the method by Fox and Pach \cite{Fox20081070}
which guarantees the separator size of $O(\sqrt{m})$. For the implementation of our algorithm we use the simpler variation described above. 

\subsection{The Method by Fox and Pach}
Fox and Pach \cite{Fox20081070} proved that the intersection graph
of a set of Jordan curves in the plane has a $2/3$-separator of size
$O(\sqrt{m})$ if every pair of curves intersects in a constant number 
of points.
Their proof is constructive, as outlined below.

First, we build the arrangement of curves, and obtain a plane graph whose vertex set are the vertices of the arrangement and consecutive vertices on a curve are joined
by an edge.\footnote{The method of Fox and Pach needs to add a constant number of additional vertices, but the main feature is that the overall complexity of the graph is $O(m+n)$.}
We triangulate the obtained plane graph to make it maximal
planar.
Then, we find a simple cycle $2/3$-separator $C$ 
(i.e., a $2/3$-separator that forms a cycle in the graph)
of size $O(\sqrt{m+n})$,
which always exists \cite{DBLP:journals/jcss/Miller86}.
We output all curves containing a vertex in $C$.

In our implementation, we construct the circle arrangement in a brute-force manner, and we use a simple cycle separator algorithm
by Holzer~\etal~\cite{DBLP:journals/jea/HolzerSWPZ09}, called the
fundamental cycle separator (FCS) algorithm.
Although the FCS algorithm has no theoretical guarantee for the size
of the obtained separator, the recent experimental study by 
Fox-Epstein~\etal~\cite{DBLP:journals/jea/Fox-EpsteinMP016} showed 
that it has a comparable
performance to the state-of-the-art cycle separator algorithm with
theoretical guarantee for most of the cases.

\subsection{Instance Generation and Experiment Setup}
We use two sets of instances for our experiments.
We call the first set a set of \emph{random instances}
and the second set a set of \emph{snake instances}.
Random instances are generated at random, 
as usually done for the experimental work on sensor networks.
We fix a square $S$ of side length $L$ and generate
$n$ unit disks in $S$ independently and uniformly at random.
If the graph is disconnected, we discard it and generate again.

The snake instance is an instance designed to be particularly challenging for our algorithm. Intuitively speaking, for any odd $n$ it places $n$ disks in a square of sizelength $\sqrt{2n}$ (a formal description follows below, see an example instance in  \figurename~\ref{fig:snake}). The disks are placed in a way that their intersection graph is a path and thus sparse. A particular trait of this instance is that it has a separator of constant size: a vertical line can do a balanced cut and only intersect one disk. However, the separator will be very hard to find when we pick a random line. The probability tends to zero as the size of the instance grows. 
The purpose of snake instances is to observe the behavior of the algorithms
in adverse conditions.

The specific construction of the snake instance is as follows: for ease of generation, we set the radius of the unit disks to $4/3$, and
we fix an odd integer $q$.
Let $n=(q^2-1)/2 + q$.
We place $n$ unit disk centers at
integer coordinates $(x,y) \in \{2i-1 \mid 1\leq i \leq (q+1)/2\}
\times \{j \mid 1\leq j \leq q\} \cup \{(2i, q^{i \bmod 2}) \mid 1\leq i\leq (q-1)/2 \}$.
Note that 
$q^{i \bmod 2}$ is equal to $1$ when $i$ is even, and is equal to $q$ when
$i$ is odd.
Hence, for each $q$, the snake instance is uniquely determined.
Observe that, by constructions, the number $m$ of edges is $n-1 = (q^2-1)/2 + q-1$.

\begin{figure}[t]
  \centering
  \scalebox{0.5}{\includegraphics{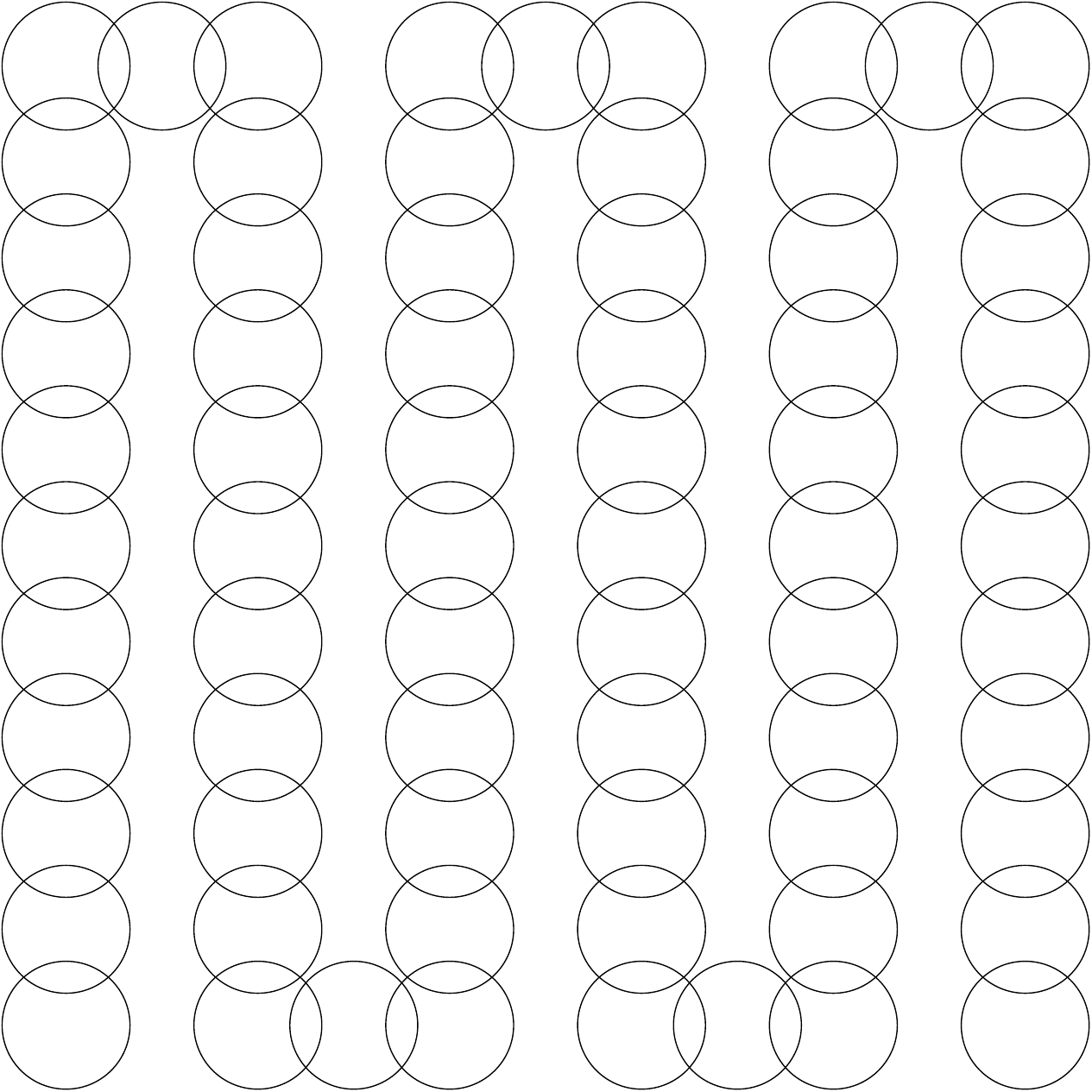}}
  \caption{
    The snake instance with $q=11$.
    In this case, $n=(q^2-1)/2+q = 71$. Therefore, the instance
    contains $71$ unit disks.
  }
  \label{fig:snake}
\end{figure}

All experiments have been performed on Intel $\circledR$ Core$^{\text{TM}}$
i7-5600U CPU @2.60GHz $\times$ 4, with 7.7GB memory and
976.0GB hard disk, running Ubuntu 14.04.3 LTS 64bit.

\subsection{Experiment 1: Quality of the Proposed Method}
In the first experiment we empirically examine the size of a separator
obtained by our proposed algorithm
with the modification proposed at the beginning of this section.

For random instances, we fix $L=100$, and vary the number of disks $n$ 
from $10,000$ to $30,000$ with an increment of $50$.
Since our algorithm is randomized, we run the algorithm $k$ times, where
$k\in \{1, 2, 10, 15, 20\}$, and compute the average separator size (because of Theorem~\ref{thm:joint_unit}, we expect the average to converge to $O(\sqrt{m\log n})$ as $k$ grows to infinity).

\begin{figure}[t]
  \centering
  \resizebox{.4\textwidth}{!}{\includegraphics{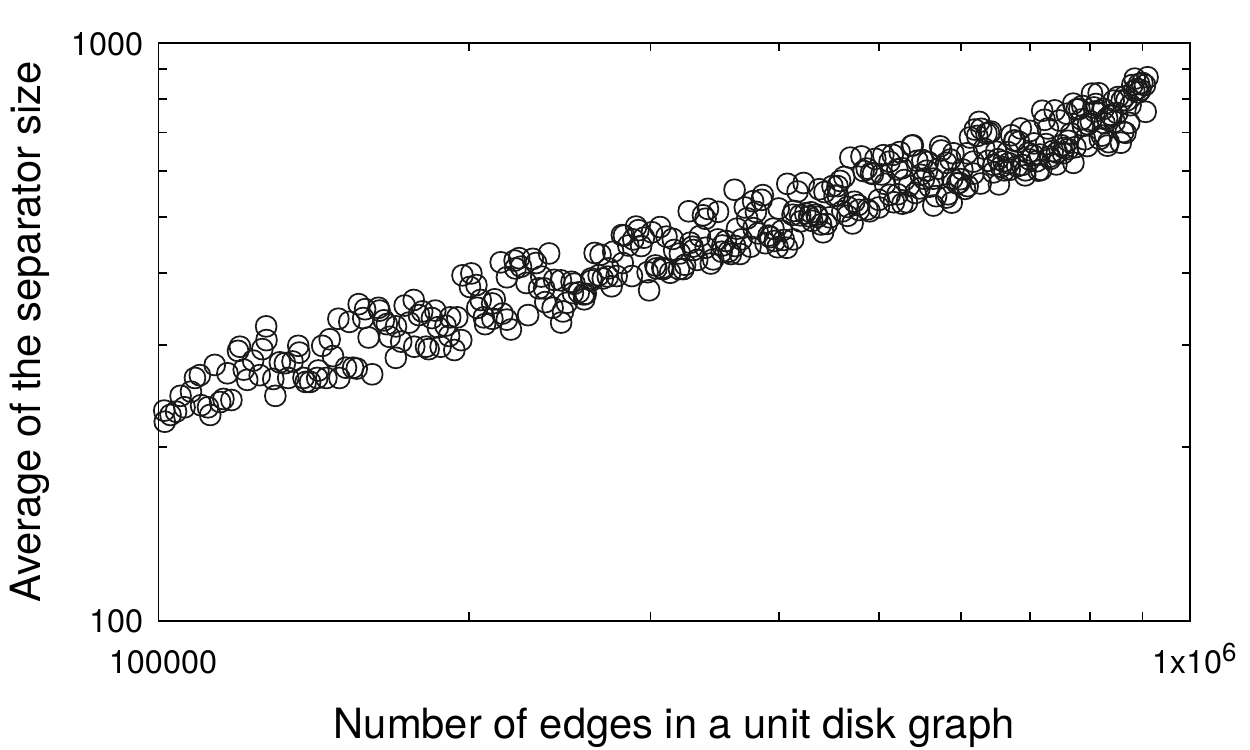}}
  \resizebox{.4\textwidth}{!}{\includegraphics{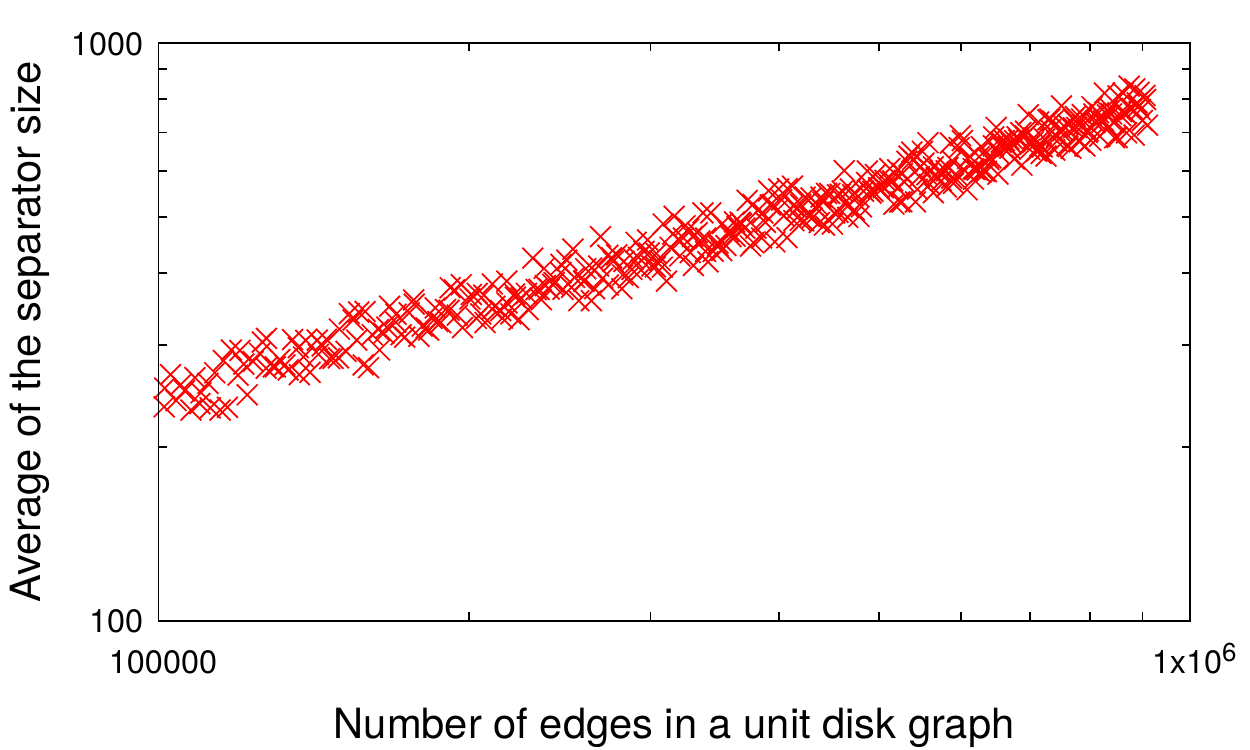}}
  \resizebox{.4\textwidth}{!}{\includegraphics{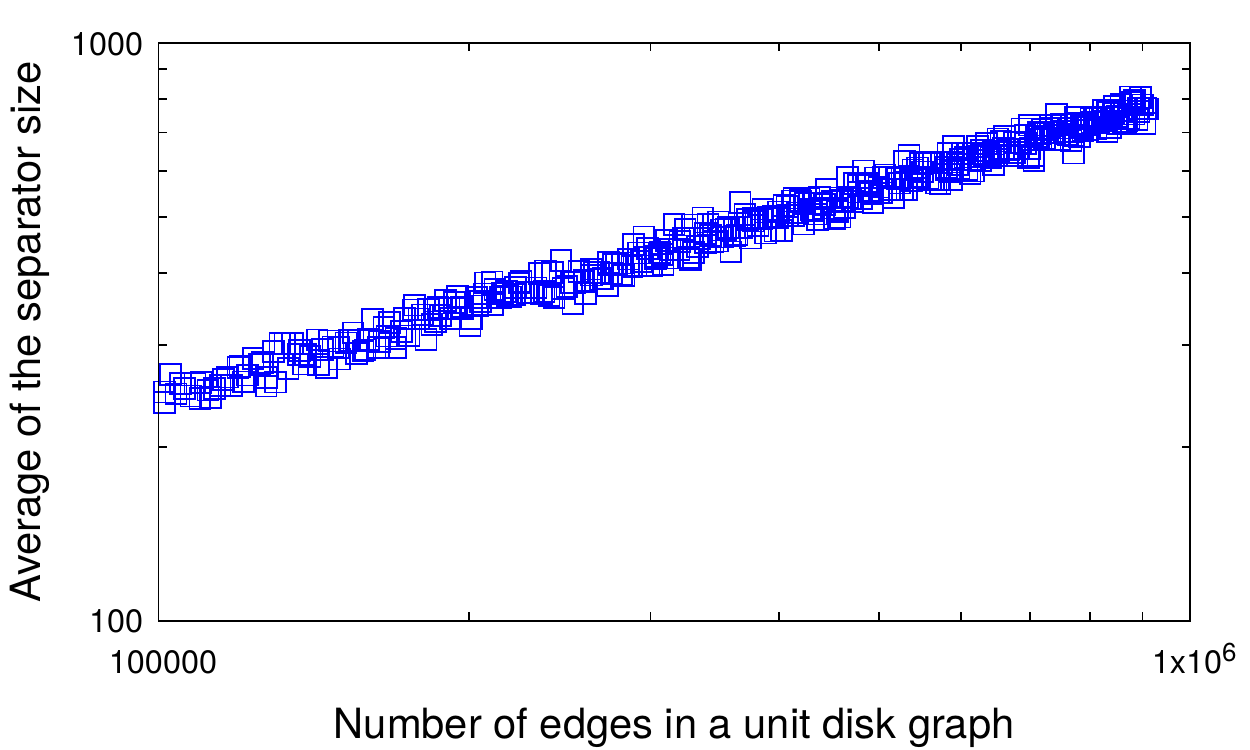}}
  \resizebox{.4\textwidth}{!}{\includegraphics{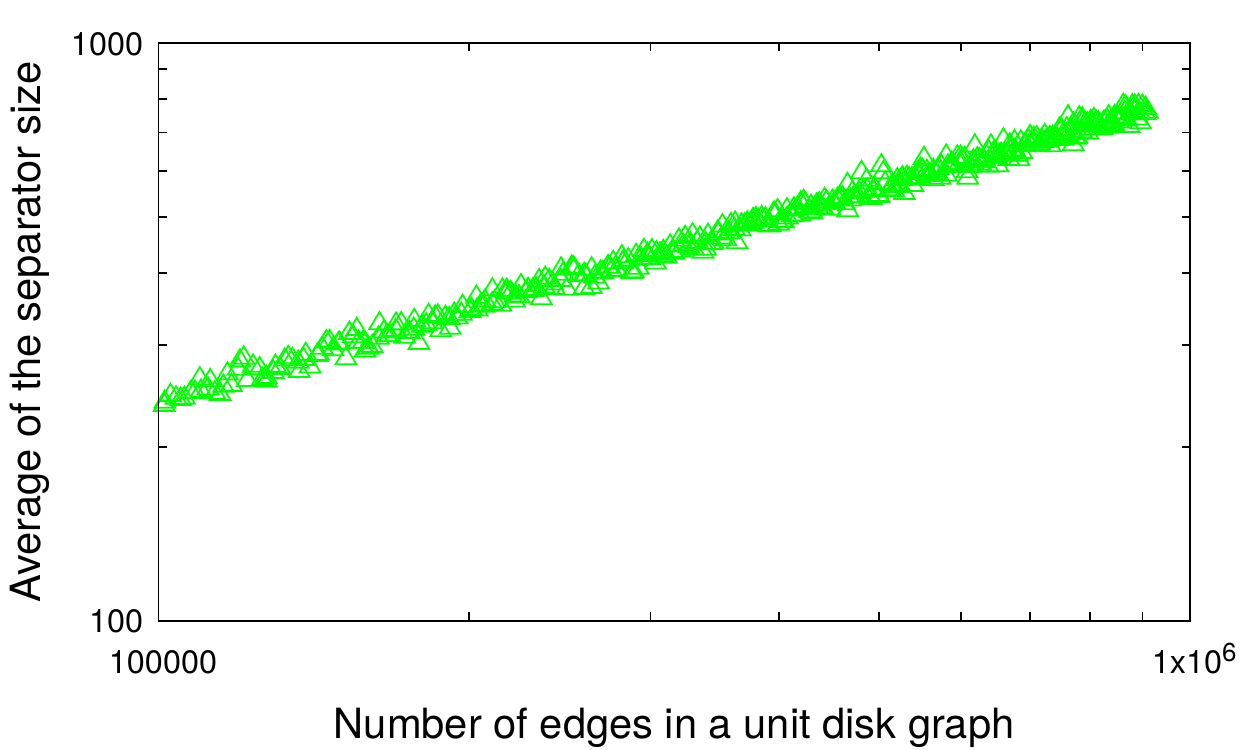}}
  \resizebox{.4\textwidth}{!}{\includegraphics{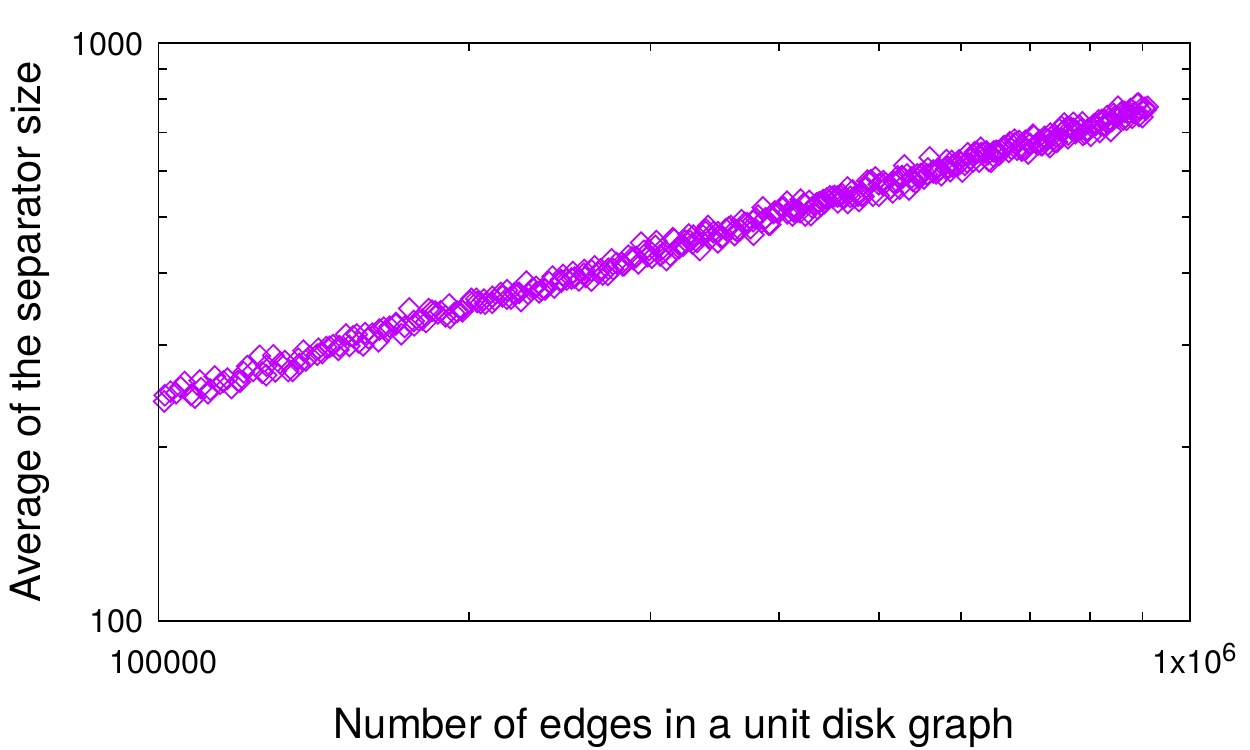}}
  \resizebox{.4\textwidth}{!}{\includegraphics{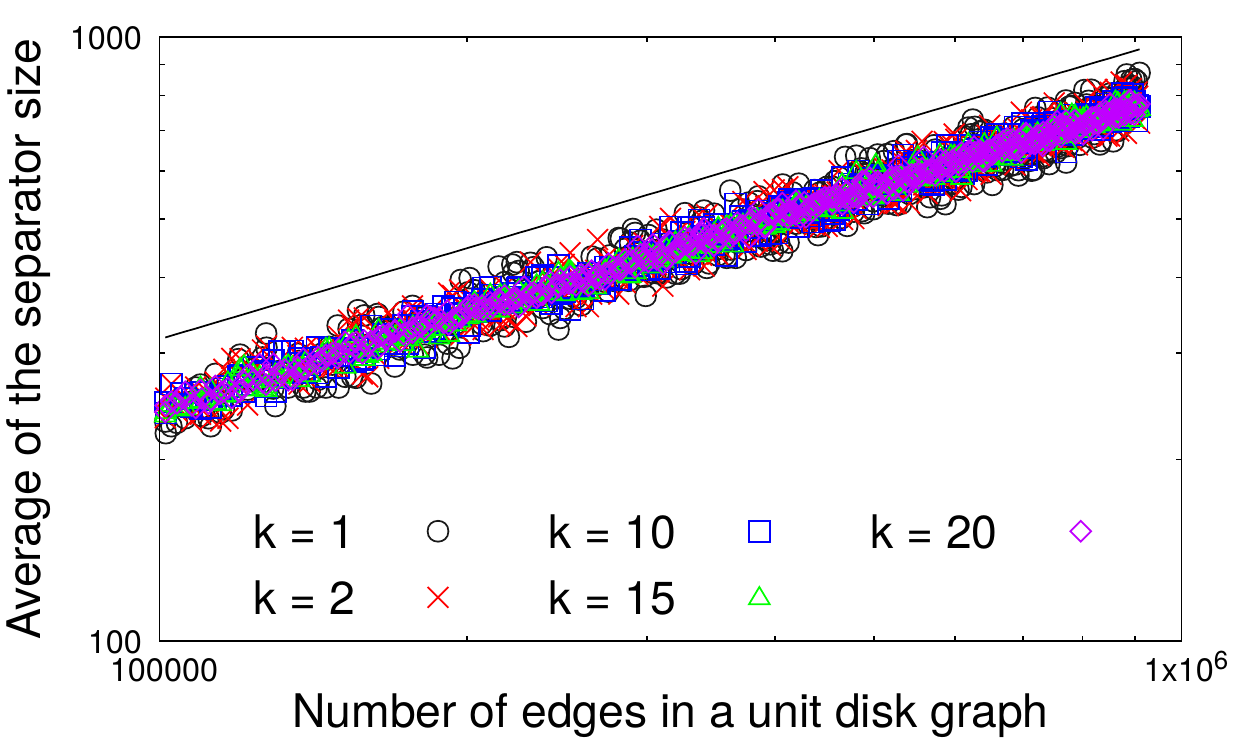}}
  \caption{
    Size of the separator obtained in our algorithm for random instances.  In all images the horizontal axis shows the number of edges in logscale and the vertical axis shows the average separator size in logscale. The images show the size of the separators when the algorithm is run $k$ times, for different values of $k$ (specifically, from top to bottom and left to right the images show $k=1,2,10,15,$ and $20$). The final image shows all plots combined in one for ease of comparison. For comparison purposes, in the last figure we have also shown with a thin black line the curve $y= a+b\cdot \sqrt{m}$ (for some constants $a,b$, line shown shifted for ease of visualization).%with all 
  }
  \label{fig:exp1-1}
\end{figure}

\begin{figure}[t]
  \centering
  \resizebox{0.8\textwidth}{!}{\includegraphics{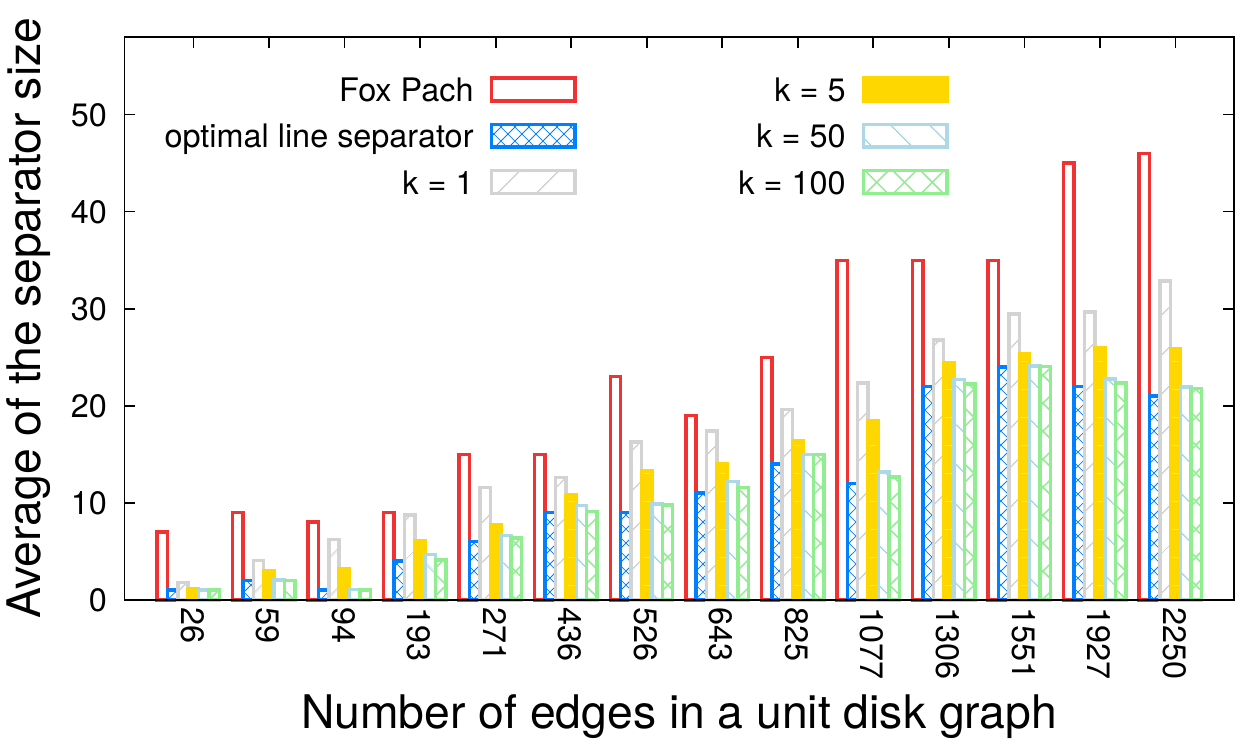}}
  \caption{
    Comparison of the separator sizes of our method with the Fox-Pach method
    for random instances.
    The horizontal axis shows the number of edges, and 
    the vertical axis shows the separator size;
    the red bar is obtained by the Fox-Pach method,
    the blue bar is an optimal line separator, 
    and the remaining four bars are obtained by our method after trying $k$ 
    random directions  ($k\in \{1,5,50,100\}$) and returning the minimum
    size separators. For all methods we repeat this process 20 times and
    display the average minimum size. 
  }
  \label{fig:exp1-2}
\end{figure}

\begin{figure}[t]
  \centering
  \resizebox{.8\textwidth}{!}{\includegraphics{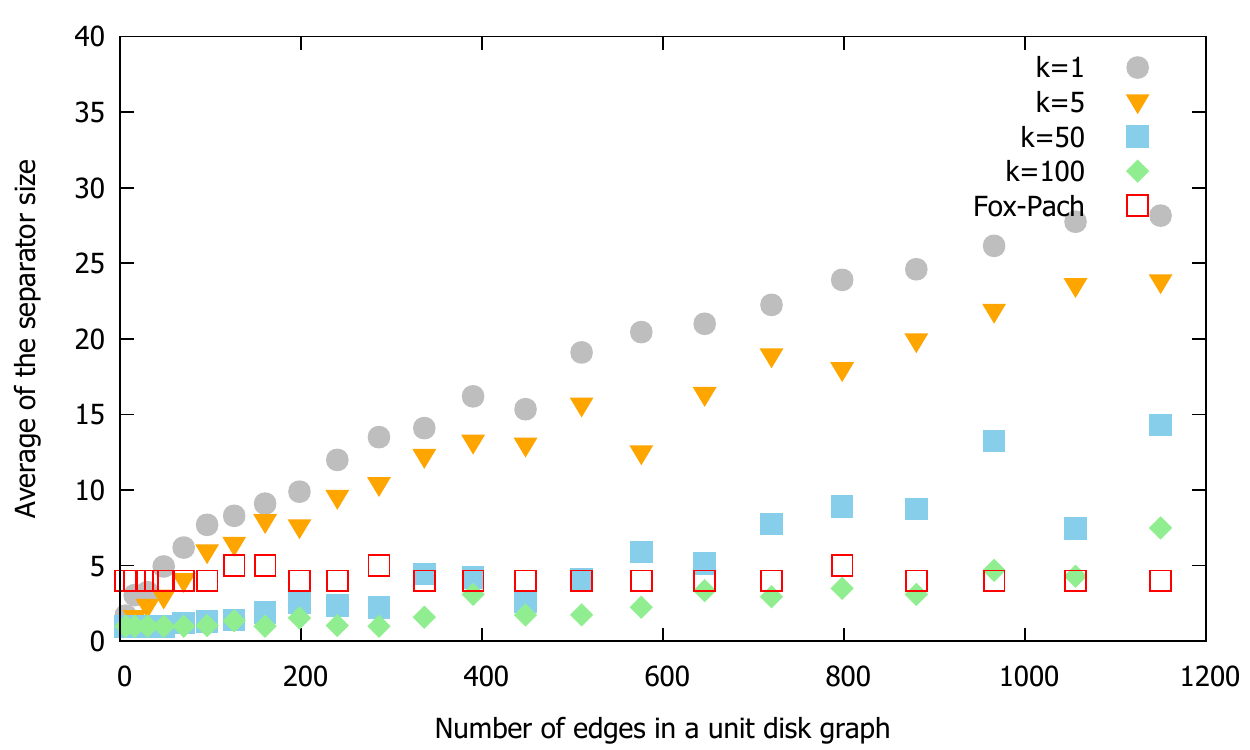}}
  \caption{
    The average separator size of our algorithm for snake instances. The horizontal axis shows the number of edges, which we control with the artificial parameter $q$ that ranges from $3$ to $199$. The vertical axis shows the average separator size when the algorithm is run $k$ times, for $k\in \{1,5, 50, 100\}$. Recall that, by construction of the snake instance, in all instances the optimal separator has size exactly one.
  }
  \label{fig:snake_plot}
\end{figure}

The computation was fast. Finding one random line separator only takes a few milliseconds. \figurename~\ref{fig:exp1-1} shows plot of the separator sizes as a function of the number of edges in the instance.
As expected, the size of the line separator increases as the number of edges increase. 
However, the experiments show something more interesting: the size of the separator seems to have an asymptotic  $\Theta( \sqrt{m})$ behavior. This further pushes our intuition that the upper bound of Theorem~\ref{thm:joint_unit} is loose. We also observe that there is high variance when $k$ is small (that is, only few lines are selected when computing the average), but this variance shrinks as $k$ grows. 

\subsection{Experiment 2: Comparison with the Method by Fox and Pach}
In the second experiment we compare the separator size of our algorithm with the one obtained with the method by
Fox and Pach.
For random instances, we fix the side length of the square to be $16$, and generate 14 random instances
with various numbers of edges (the number of disks is not fixed).
For each instance, we run our implementation of the Fox-Pach method
and our simplified algorithm described earlier.

Unlike in the first experiment, this time our algorithm picks $k$ random directions (where $k\in \{1,5,50,100\}$) and we take minimum separator size among the $k$ directions. This way, as $k$ grows to infinity, the size of the separator should get closer and closer to the real minimum. For a fixed instance and value of $k$, we run our algorithm 20 times and compute the average size of the returned separators.

We also examine the size of an optimal line separator to investigate
the limitation of the randomized line-separator approach.
Here, an \emph{optimal line separator} means the line that intersects the fewest disks, that also is a $2/3$-separator. Such a line can be found in $\tilde{O}(n^2)$ time by looking at all possible bitangents and finding the best $2/3$-separator among those that have that fixed direction, where the $\tilde{O}(\cdot)$ notation suppresses logarithmic factors. Since running time is not our main focus of interest, we instead implemented a simpler $O(n^3)$-time algorithm.

\figurename~\ref{fig:exp1-2} shows the result of the computation.
On average, our method is strictly better than the Fox-Pach method
even when $k=1$ (that is, both methods provide a $2/3$-separator, but ours intersects fewer disks).
As $k$ increases, the average separator size approaches the size of the
optimal line separator. Empirically, 50 iterations are enough to obtain a reasonably good
line separator, but even with 100 iterations we cannot obtain
the optimal separator for large instances.

We repeated the same experiment for snake instances. To generate different instances, we vary the parameter $q$ from $3$ to $199$ with an increment of $2$.

\figurename~\ref{fig:snake_plot} shows the result of this third experiment.
For random line separators, 
we observe the tendency that the average separator size decreases as
$k$ increases.
Another observation is that the random line separator rarely finds an
optimal solution. Recall that the optimal separator has size one, but the possible range of directions that achieve this tends to zero as $q$ grows.
On the other hand, 
the separator size by the Fox-Pach method is steady around four and five.
We note that our implementation of the Fox-Pach method spends more than five minutes for the
computation when the number of edges is $576$ (i.e., when $q=33$) whereas our algorithm needs miliseconds per instance.

\begin{figure}[h]
  \centering
  \centering
  \resizebox{0.3\textwidth}{!}{\includegraphics{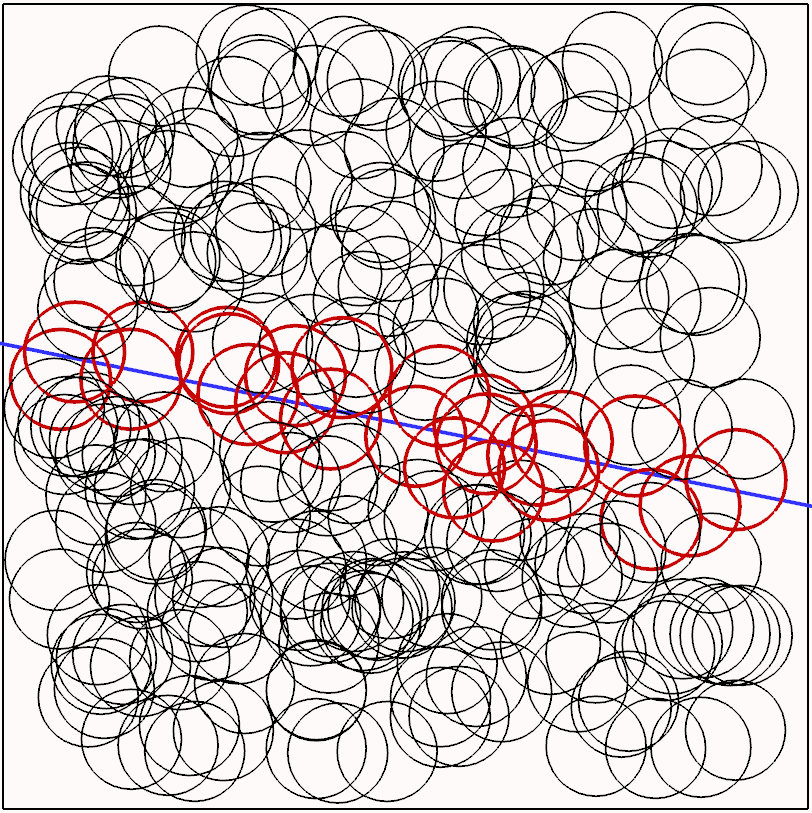}}
  \qquad
  \resizebox{0.3\textwidth}{!}{\includegraphics{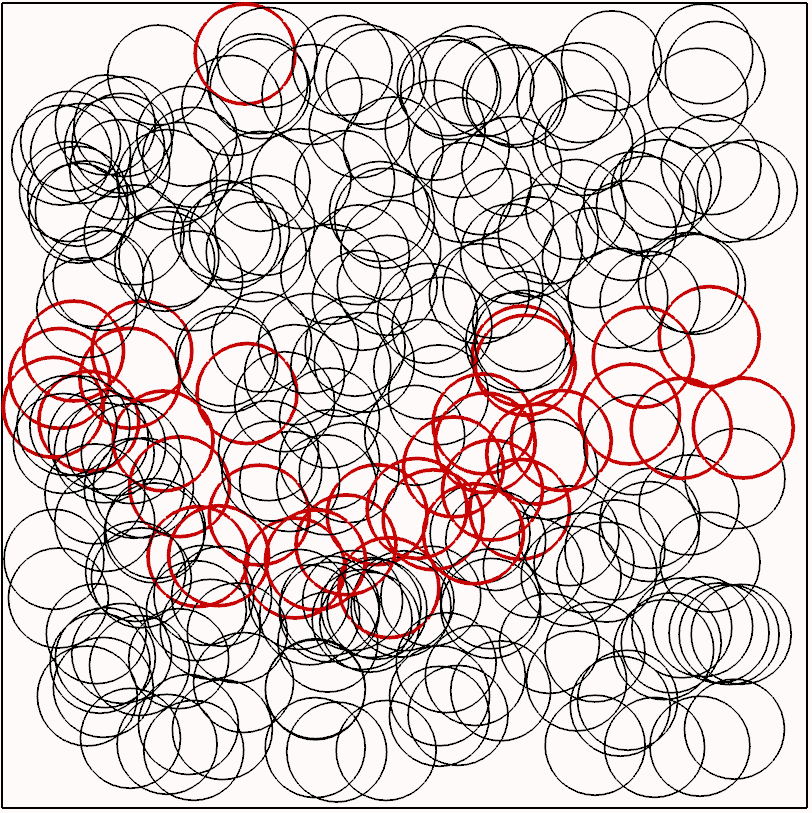}}
  \caption{
    The obtained separators when the number of edges is 1,551.
    Red disks form a $2/3$-separator.
    (Left) An optimal line separator of size 24.
    (Right) A separator obtained by the Fox-Pach method of size 35.
  }
  \label{fig:exp2-disk}
\end{figure}

Finally, \figurename~\ref{fig:exp2-disk} shows the shape of the obtained separators
for a random instance.
As the picture shows, a line separator is more geometrically appealing, whereas
the separator by the Fox-Pach method is geometrically more complex.
In particular, we note that the Fox-Pach method chooses a disk on the ``outer boundary'' that is often disconnected from the other disks. 
This is a result of additional artificial edges added to triangulate the graph.

\section{Conclusion}

The paper leaves some open questions.
The proof of Theorem~\ref{thm:joint_unit} relies on the existence 
of a centerpoint, 
and the balance of $2/3$ looks intrinsic with this approach.
It is an open problem to give a better bound on the balance
of a line separator,
possibly at the expense of a worse bound on the separator size.
This may lead to a trade-off between
the balance and the separator size for line separators.

This paper concentrated on disks, and in particular unit disks.
The method explained at the end of Section \ref{subsec:generalcase}
can be used to show a similar result to Theorem~\ref{thm:joint_unit}
for \emph{fat} convex objects of similar size \cite{DBLP:journals/algorithmica/BergSVK02}.
However, we do not know whether the method can be extended to arbitrary convex objects of similar size.
This is another open problem.

\paragraph{Acknowledgments.}

The authors thank Michael Hoffmann and Eli Fox-Epstein for 
motivating discussion on the topic.
We also thank the anonymous referees for their careful and constructive criticism that helped improve the presentation of this document.

\bibliographystyle{plain}
\bibliography{biblioSepar}

\end{document}